\date{today}
   \newtheorem{remark}{Remark}
   		\newtheorem{theorem}{Theorem}
      		\newtheorem{proposition}{Proposition}
\newcommand{\R}{\mathbb R}
\newcommand{\Rmnum}[1]{\expandafter\@slowromancap\romannumeral #1@}
\newcommand{\beq}{\begin{equation}}
\newcommand{\eeq}{\end{equation}}
\newcommand{\ben}{\begin{eqnarray}}
\newcommand{\een}{\end{eqnarray}}
\newcommand{\beno}{\begin{eqnarray*}}
\newcommand{\eeno}{\end{eqnarray*}}
\numberwithin{equation}{section}
\begin{document}
	
\title[Instability of periodic traveling waves in the mKdV equation]{Instability bands for periodic traveling waves \\in the modified Korteweg-de Vries equation}

	\author{Shikun~Cui}
\address[Shikun~Cui]{School of Mathematical Sciences, Dalian University of Technology, Dalian, 116024,  China; 
	Department of Mathematics and Statistics, McMaster University, Hamilton, Ontario L8S 4K1, Canada}
\email{cskmath@163.com}
\author{Dmitry~E.~Pelinovsky}
\address[Dmitry~E.~Pelinovsky]{Department of Mathematics and Statistics, McMaster University, Hamilton, Ontario L8S 4K1, Canada }
\email{dmpeli@math.mcmaster.ca}

\date{\today}
\maketitle

\begin{abstract}
Two families of periodic traveling waves exist in the focusing 
mKdV (modified Korteweg-de Vries) equation. Spectral stability of these waveforms with respect to co-periodic perturbations of the same period has been previously explored by using spectral analysis and variational formulation. By using tools of integrability such as a relation between squared eigenfunctions of the Lax pair and eigenfunctions of the linearized stability problem, we revisit the spectral stability of these waveforms with respect to perturbations of arbitrary periods. In agreement with previous works, we find that one family is spectrally stable for all parameter configurations, whereas the other family is  spectrally unstable for all parameter configurations. We show that the onset of the co-periodic instability for the latter family changes the instability bands from figure-$8$ (crossing at the imaginary axis) into figure-$\infty$ (crossing at the real axis). 
\end{abstract}

{\small {\bf Keywords:} modified Korteweg-de Vries equation, periodic traveling wave, modulational instability}

\section{Introduction}

Instabilities of steadily propagating waves on a fluid surface, called Stokes waves, have been recently explored in many computational details due to advanced numerical algorithms with high precision and accuracy \cite{DDLS24,DO11,DS23,DS24,Lush2}. As the Stokes wave gets a larger height and higher steepness, the spectral problem for co-periodic perturbations admits more unstable eigenvalues which bifurcate from the origin due to coalescence of pairs of purely imaginary eigenvalues and splitting into pairs of real eigenvalues \cite{DS23}. At the same time, the spectral problem for perturbations of arbitrary periods display more complicated patterns of instability bands with multiple loops on the purely imaginary and real axes \cite{DO11,DS24}. The figure-$8$ instability (crossing at the imaginary axis) which is standard in the limit of small amplitudes \cite{Berti1,Berti2,Berti3,CD} transforms into the figure-$\infty$ instability (crossing at the real axis) with further exchanges between co-periodic and anti-periodic eigenvalues as eigenvalues of the dominant instability \cite{DDLS24}. It was conjectured that the recurrent exchange between these patterns is universal near the limit to the periodic wave with the maximal height and occur in other wave models such as the Whitham equation \cite{Carter}. 

This motivating picture related to the Stokes waves in Euler's equations calls for a more systematic investigation of the spectral instability of periodic traveling waves in the basic fluid models such as the focusing mKdV (modified Korteweg-de Vries) equation. The spectral stability of periodic traveling waves in this model has been studied in many details by using tools of integrability  \cite{CP0,CP1,DN1,Hoefer,Grava,LS1,LS2} and functional analysis \cite{angulo1,AN1,bronski,BHJ,DK,LP3,NLP2,pelinovsky}. The purpose of this work is to give a complete picture of the spectral stability of the general periodic traveling waves to perturbations of all periods and to show that the onset of the co-periodic instability changes the instability bands from figure-$8$ to figure-$\infty$.

We will describe the state-of-the-art and present the main results by using the standard form of the focusing mKdV equation,
\begin{equation}\label{ini_1}
u_t+6u^2u_x+u_{xxx}=0,
\end{equation}
where subscripts $x$ and $t$ represent the partial derivatives in the spatial and temporal variables respectively and where $u = u(x,t)$ is real. 
The initial-value problem for the mKdV equation (\ref{ini_1}) is globally well-posed in the energy space $H^1$ both on $\mathbb{R}$ and in the periodic domain \cite{KPV1}. It was further proven in \cite{CKSTT} that the global well-posedness can be extended to Sobolev spaces of low regularity in 
$H^s$ for $s>\frac{1}{4}$ on $\mathbb{R}$ and $H^s$ for $s>\frac{1}{2}$ in the periodic domain. By using integrability, the global well-posedness of the mKdV equation was extended to $H^s$ for $s > -\frac{1}{2}$ on $\mathbb{R}$ \cite{Visan}.

The travelling wave of the mKdV equation (\ref{ini_1}) with the spatial profile $U(x) : \mathbb{R}\to \mathbb{R}$ is given by $u(x,t)=U(x-ct)$, where $c$ is the wave speed. The wave profile $U$ satisfies the third-order equation 
\begin{equation}
\label{ini_3}
U'''+6U^2U'-cU'=0, 
\end{equation}
which is integrated into the second-order equation 
\begin{equation}\label{ini_4}
U''+2U^3-cU=b,
\end{equation}
with the integration constant $b$. Two particular waveforms for the periodic traveling waves were studied in many details:
\begin{equation}
\label{dn-wave}
U(x) = {\rm dn}(x,k), \quad c = 2-k^2, \quad b = 0
\end{equation}
and 
\begin{equation}
\label{cn-wave}
U(x) = k{\rm cn}(x,k), \quad c = 2k^2-1, \quad b = 0,
\end{equation}
where $k$ is elliptic modulus, $k \in(0,1)$, and the Jacobi elliptic functions have been used. We shall refer to (\ref{dn-wave}) and (\ref{cn-wave})  
as to the dnoidal and cnoidal waves respectively. With the use of the scaling transformation 
$$
u(x,t) \mapsto \alpha u(\alpha x, \alpha^3 t), \qquad \alpha > 0
$$
and the translational symmetries
$$
u(x,t) \mapsto u(x+x_0,t+t_0), \qquad x_0, t_0 \in \mathbb{R},
$$ 
these two waveforms extend to all possible traveling periodic wave solutions of the second-order equation (\ref{ini_4}) with $b = 0$.

It was shown in \cite{angulo1,AN1,DK,pelinovsky} that 
\begin{itemize}
	\item the dnoidal wave (\ref{dn-wave}) is spectrally stable to co-periodic perturbations for all $k \in (0,1)$, 
	\item the cnoidal wave (\ref{cn-wave}) is spectrally stable to co-periodic perturbations for $k \in (0,k_*)$ and spectrally unstable for $k \in (k_*,1)$, where $k_* \approx 0.909$. 
\end{itemize}
Furthermore, it was shown in \cite{bronski,BHJ} that 
\begin{itemize}
	\item the dnoidal wave (\ref{dn-wave}) is modulationally stable to long periods for all $k \in (0,1)$,
	\item the cnoidal wave (\ref{cn-wave}) is modulationally unstable to long periods for all $k \in (0,1)$. 
\end{itemize}

Regarding the periodic traveling waves with $b \neq 0$, stability of one particular solution was proven in \cite{AP2}. Two continuous families exist 
for $b \neq 0$ \cite{CP1,Hoefer} which extend the dnoidal and cnoidal waves (\ref{dn-wave}) and (\ref{cn-wave}), see equations (\ref{solution_1}) and (\ref{solution_2}) below. Stability of these waveforms with respect to co-periodic perturbations has been systematically studied in \cite{LP3,NLP2}
by using minimization of the quadratic form 
$$
\oint [(u')^2 + c u^2 ] dx
$$
subject to fixed mean value $\oint u dx = m$ and the fixed $L^4$ norm 
$\oint u^4 dx = 1$ (see review in \cite{Pelin-review}). 

\begin{itemize}
	\item Under the zero-mean constraint, $m = 0$, the cnoidal wave (\ref{cn-wave}) is a minimizer of the constrained variational problem for $k \in (0,k_*)$ and a saddle point for $k \in (k_*,1)$ with two symmetric minimizers bifurcating at $k = k_*$ in the supercritical pitchfork bifurcation \cite{NLP2}. \\
	
	\item Under a fixed nonzero mean value, $m \neq 0$, the imperfect pitchfork bifurcation breaks one branch of global minimizers of the constrained variational problem away from the other two branches, one of which is a saddle point and the other one is a local minimizer \cite{LP3}. The local and global minimizers are spectrally stable to co-periodic perturbations and the saddle points are spectrally unstable for all parameter values \cite{LP3}. Furthermore, the local and global minimizers are realized with both solution waveforms (\ref{solution_1}) and (\ref{solution_2}), whereas the saddle points are only realized by the waveform  (\ref{solution_2}) \cite{LP3}. 
\end{itemize}

Next, we present the main results of this work in relation to the state-of-the-art. 

\begin{itemize}
	\item First, we show by using the relation between squared eigenfunctions satisfying the Lax pair and eigenfunctions of the spectral stability problem that the solution waveform (\ref{solution_1}) generalizing dnoidal waves (\ref{dn-wave}) is spectrally stable for all parameter configurations and the solution waveform (\ref{solution_2}) generalizing cnoidal waves (\ref{cn-wave}) is spectrally unstable for all parameter configurations. This agrees with Theorem 2 in \cite{bronski} where modulational stability of the solution waveforms was studied to perturbations of long periods.\\
	
	\item Second, we show that the spectral instability of the solution waveform (\ref{solution_2}) to co-periodic perturbations affects the instability bands and triggers a transformation of figure-$8$ before the co-periodic instability to figure-$\infty$ after the co-periodic instability. This conclusion agrees with the modulational stability theory in \cite{bronski} and numerical approximations (see Figure 1 in \cite{bronski}) but has not been described in precise details. The analytical study of the co-periodic stability in \cite{DK} was accompanied by numerical approximations (see Figure 1 in \cite{DK}), which showed some transformations of figure-$8$ before figure-$\infty$ was attained. The stability spectrum of the dnoidal and cnoidal waves was computed numerically to in \cite{DN1} for just one value of $k \in (0,1)$, for which only figure-$8$ was obtained and the transformation to figure-$\infty$ was missed.
\end{itemize}

\begin{figure}[htpb!]
	\centering
	\subfigure[$k=0.75$]{\includegraphics[width=1.8in,height=1.5in]{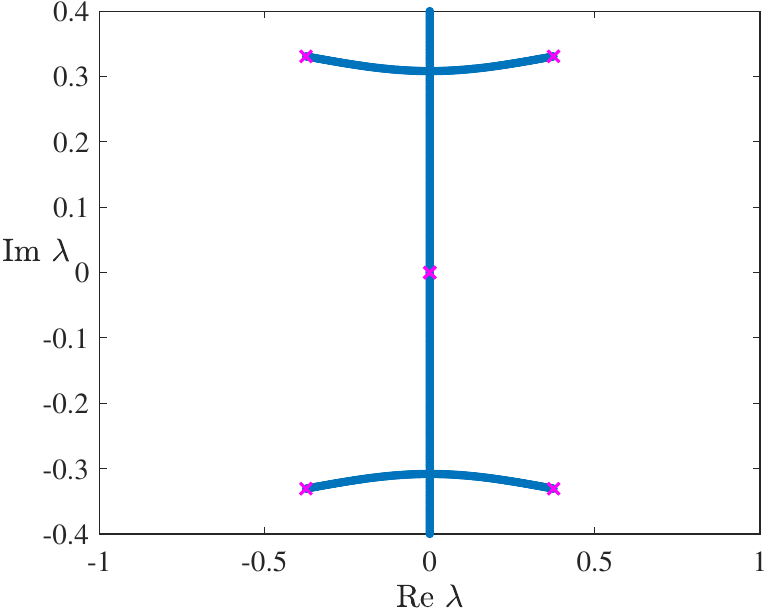}}
	\subfigure[$k=0.9$]{\includegraphics[width=1.8in,height=1.5in]{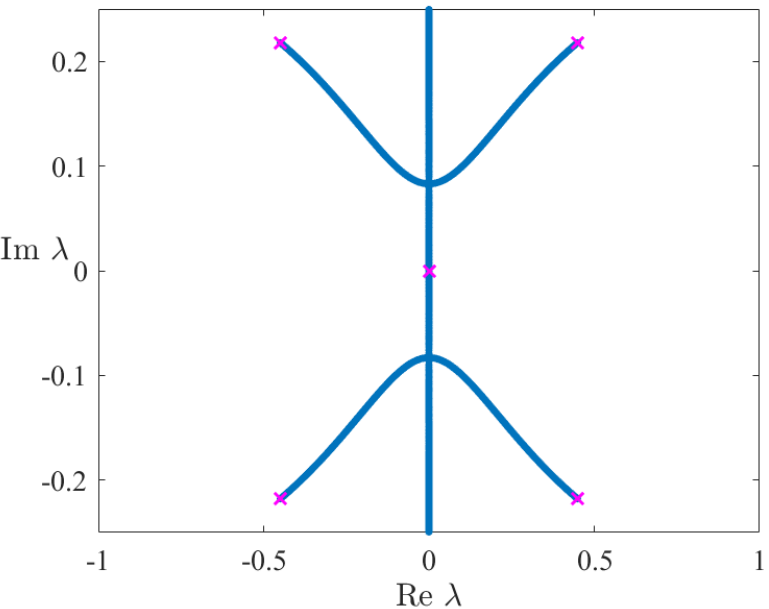}}
	\subfigure[$k=0.91$ ]{\includegraphics[width=1.8in,height=1.5in]{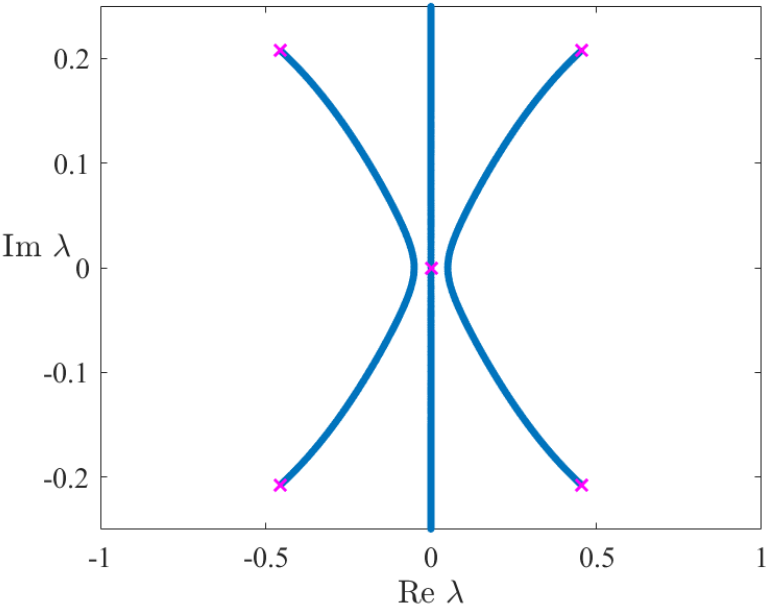}}\\
	\subfigure[$k=0.75$]{\includegraphics[width=1.8in,height=1.5in]{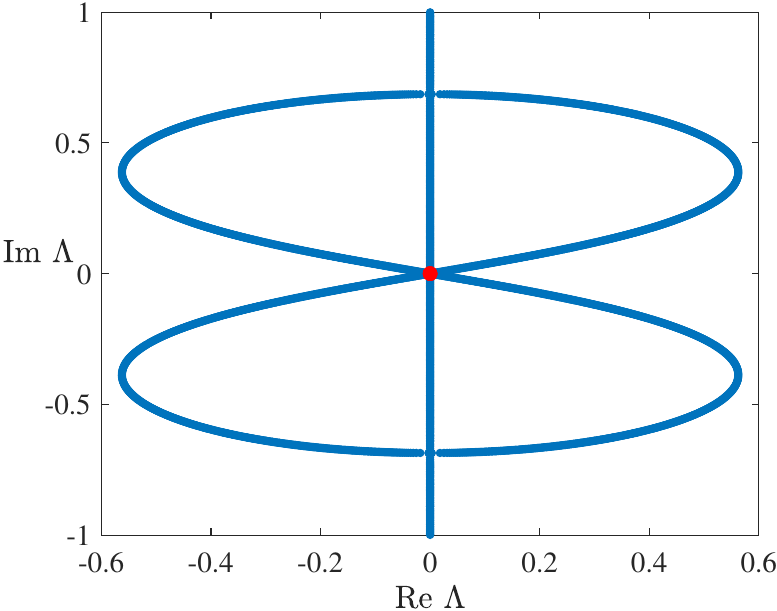}}
	\subfigure[$k=0.9$]{\includegraphics[width=1.8in,height=1.5in]{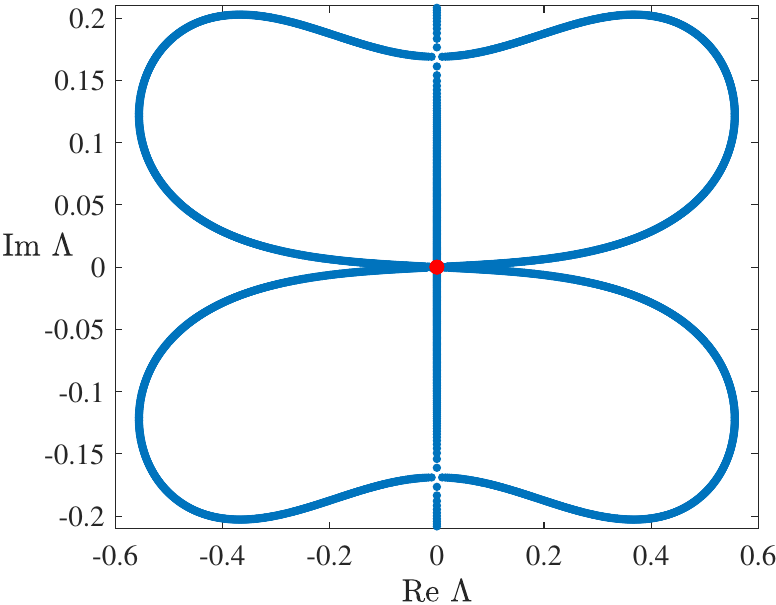}}
	\subfigure[$k=0.91$]{\includegraphics[width=1.8in,height=1.5in]{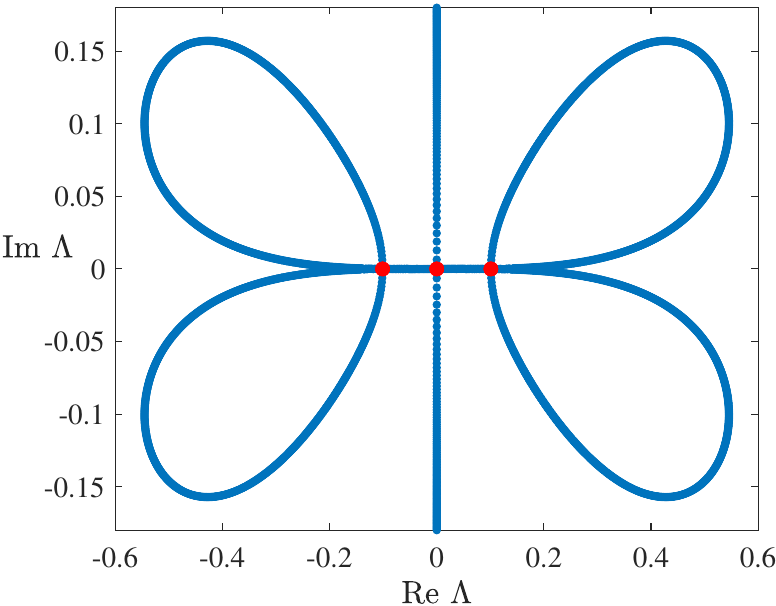}}\\
	\subfigure[$k=0.93$]{\includegraphics[width=1.8in,height=1.5in]{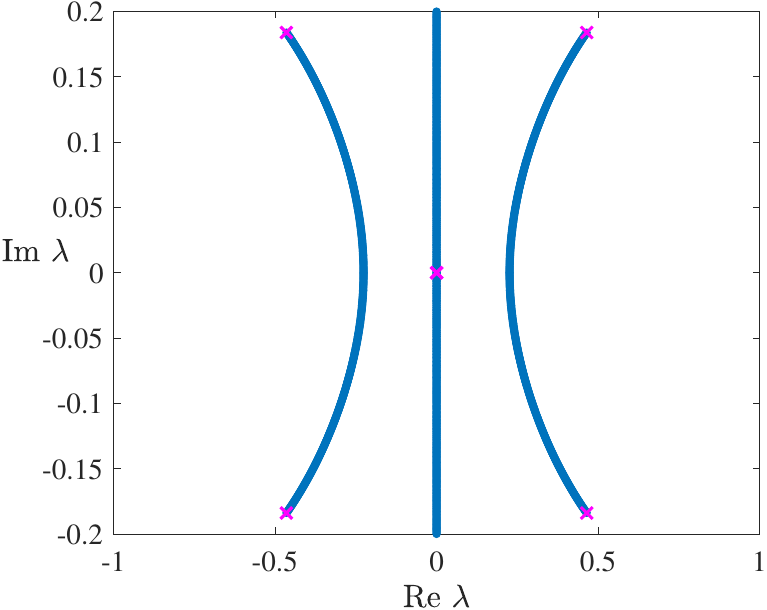}}
	\subfigure[$k=0.95$]{\includegraphics[width=1.8in,height=1.5in]{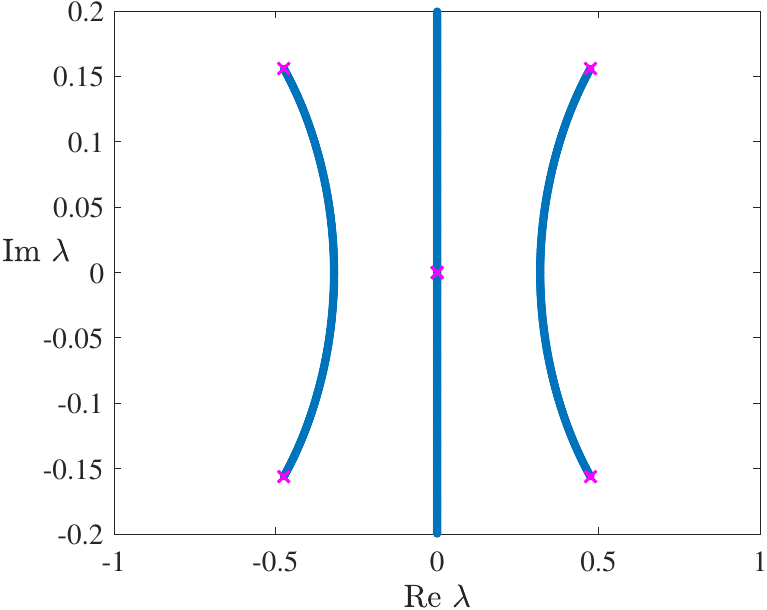}}	\subfigure[$k=0.97$]{\includegraphics[width=1.8in,height=1.5in]{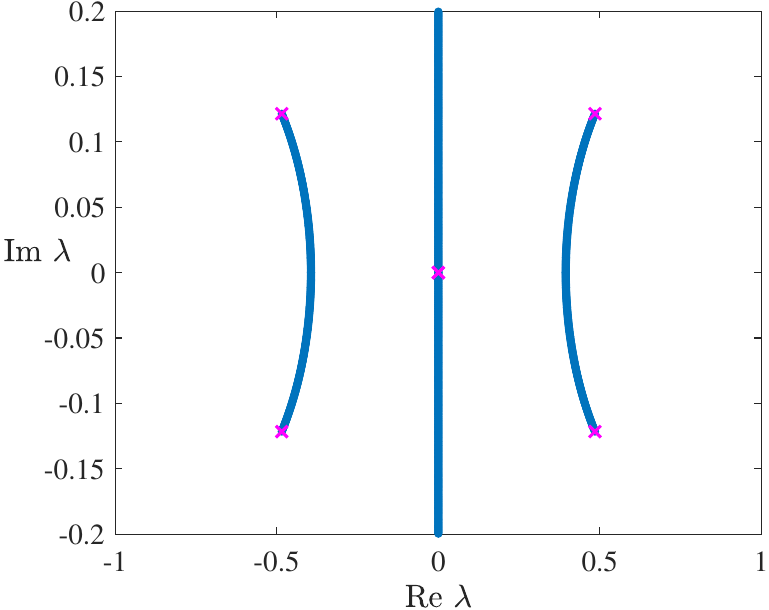}}\\
	\subfigure[$k=0.93$]{\includegraphics[width=1.8in,height=1.5in]{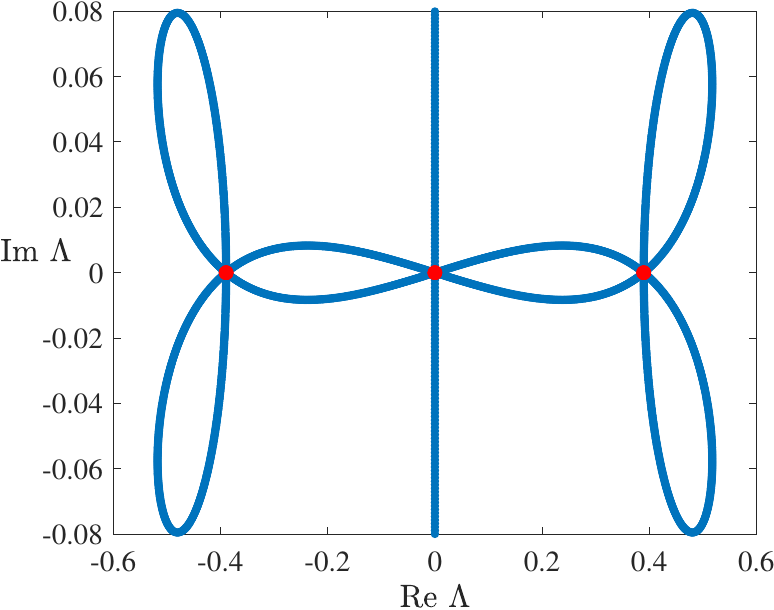}}	\subfigure[$k=0.95$]{\includegraphics[width=1.8in,height=1.5in]{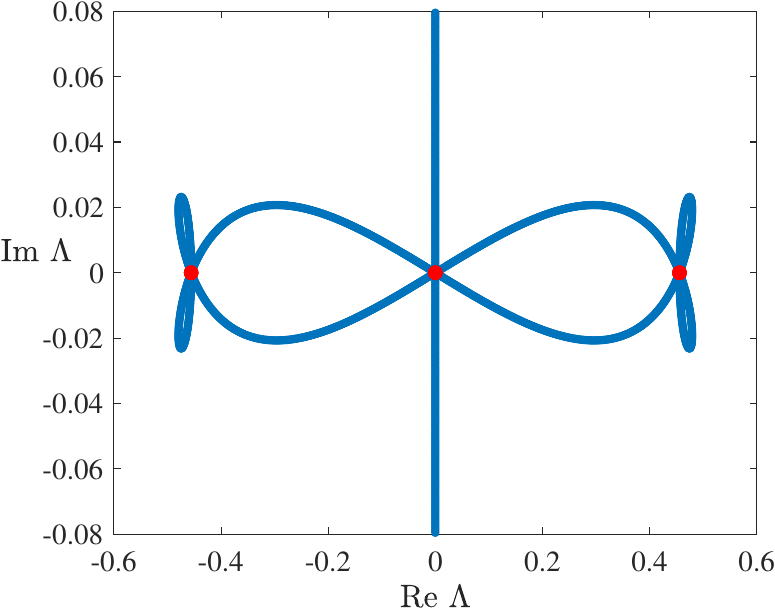}}
	\subfigure[$k=0.97$]{\includegraphics[width=1.8in,height=1.5in]{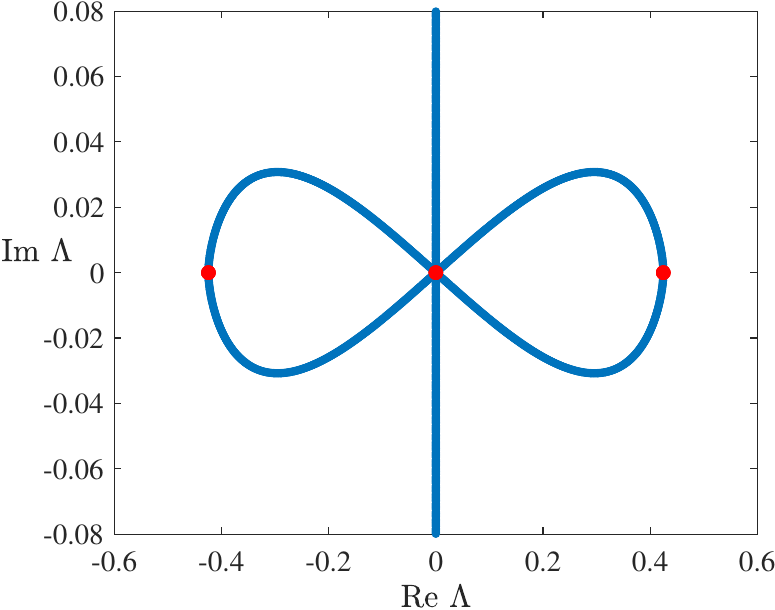}}
	\vspace{-0.2cm}
	\caption{The Lax and stability spectra for the cnoidal wave (\ref{cn-wave}) with different values of $k$. (a)-(c) and (g)-(i): Lax spectrum  in $\lambda$-plane. (d)-(f) and (j)-(l): stability spectrum  in $\Lambda$-plane.}
	\vspace{-0.2cm}
	\label{fig_case}
\end{figure}

The main phenomenon of the transformation of figure-$8$ to figure-$\infty$ is  shown on Figure \ref{fig_case} for the cnoidal wave (\ref{cn-wave}) with different values of $k \in (0,1)$. As $k$ crosses $k_* \approx 0.909$, the co-periodic instability arises [the corresponding eigenvalues are shown by red dots on panels (f) and (j)--(l)]. This leads to the transformation of the figure-$8$ [panels (d) and (e)] into a propeller [panel (f)], which becomes a complicated figure with two loops extended along the horizontal direction and four loops extended in the vertical direction [panels (j) and (k)]. Finally, the four loops disappear and the figure-$\infty$ is formed [panel (l)]. The point $k = k_*$ corresponds to the marginal case when the two bands of the Lax spectrum cross at the origin, separating their crossing at the pure imaginary axis for $k < k_*$ [panels (a) and (b)] and at the real axis for $k > k_*$ [panels (c) and (g)--(i)]. Four end points of the two complex bands of the Lax spectrum are shown by the magenta crosses, these are roots of the characteristic polynomial for the periodic traveling waves, see equation (\ref{p_1}) below.

Our analytical study and numerical computations of the spectral stability of the periodic traveling waves of the mKdV equation (\ref{ini_1}) rely on a relation between squared eigenfunctions satisfying the Lax pair and eigenfunctions of the spectral stability problem, see equation (\ref{Lambda}) below. The method of squared eigenfunctions was explored in a similar context of stability of periodic traveling waves in various integrable models in \cite{DMS,DS1,DU_1,UD_1}. The spectral stability problem can be solved due to separation of variables and this has been explored 
for integrable models in  \cite{CPU_1,CPW_1,CP_3,CuiP}, see also \cite{P_2,CP_2} for the cases where the spectral stability of periodic traveling waves cannot be studied by separation of variables.

It is interesting to emphasize that the Lax spectrum for the cnoidal wave (\ref{cn-wave}) of the focusing mKdV equation (\ref{ini_1}) is identical to that of the cnoidal wave solution of the focusing NLS equation \cite{DS1,CPW_1}. The spectral stability spectrum is however very different since the stability spectrum of the cnoidal waves in the focusing NLS equation only features the figure-$8$ instability \cite{DS1,CPW_1}. Although transformations of the instability bands for Stokes waves in Euler's equations are more complicated than the one in Figure \ref{fig_case}, see \cite{DDLS24}, the main transformation of figure-$8$ into figure-$\infty$ due to the co-periodic instability is well captured by the wave model of the focusing mKdV equation. 

The paper is organized as follows. Section \ref{sec-2} presents the waveforms for the periodic traveling waves of the mKdV equation (\ref{ini_1}). Section \ref{sec_eig} characterizes the periodic traveling waves by using the Lax pair and gives a relation between the squared eigenfunctions of the Lax pair and the eigenfunctions of the linearized mKdV equation. Section \ref{sec_laxpspec} 
describes the analytical results on the location of the Lax and stability spectra for the two waveforms of the periodic traveling waves. 
Section \ref{sec_lax_periodic} contains outcomes of the numerical computations based on a robust numerical algorithm for approximation of the Lax spectrum. Section \ref{sec_modu} gives a summary of our findings.

\section{Waveforms for the periodic traveling waves}
\label{sec-2}

Integrating the second-order equation (\ref{ini_4}), we obtain the first-order invariant
\begin{equation}
\label{ini_5}
(U')^2 + Q(U) = d,
\end{equation}
where $Q(U):= U^4-cU^2-2bU$ and $d$ is a constant. Two particular waveforms generalizing the dnoidal and cnoidal waves (\ref{dn-wave}) and (\ref{cn-wave}) are given in terms of the four roots $\{ u_1,u_2,u_3.u_4 \}$ of $Q(u)=d$. The four roots satisfy the relations
	\begin{equation}\label{relation_u1}
	\begin{cases}
	u_1+u_2+u_3+u_4=0,\\
    u_1u_2+u_1u_3+u_1u_4+u_2u_3+u_2u_4+u_3u_4 = -c,\\
	u_1u_2u_3+u_1u_2u_4+u_1u_3u_4+u_2u_3u_4=2b,\\
	u_1u_2u_3u_4=-d,
	\end{cases}
	\end{equation}
which follow by expanding 
	\begin{align*}
	Q(u)-d = (u-u_1)(u-u_2)(u-u_3)(u-u_4) = u^4 - c u^2 - 2 b u - d.
	\end{align*}
The following proposition states the existence of two waveforms for the periodic traveling waves. The statement is based on the explicit computations verified in \cite{CP1}. 

\begin{proposition}
	\label{pro_periodic}
If the roots $\{ u_1,u_2,u_3.u_4 \}$ are real and ordered as $u_4\leq u_3\leq u_2\leq u_1$,  then the first-order invariant (\ref{ini_5}) is satisfied by 
	\begin{equation}
	\label{solution_1}
	U(x) = u_4+\frac{(u_1-u_4)(u_2-u_4)}{(u_2-u_4)+(u_1-u_2){\rm sn}^2(\nu x,k)},
	\end{equation}
	where 
	$$
	\nu = \frac{1}{2}\sqrt{(u_1-u_3)(u_2-u_4)}, \quad 
	k = \frac{\sqrt{(u_1-u_2)(u_3-u_4)}}{\sqrt{(u_1-u_3)(u_2-u_4)}}.
	$$
If $\{ u_1,u_2 \}$ are real and $\{u_3.u_4 \}$ are complex-conjugate 
such that $u_2\leq u_1$ and $u_3=\bar{u}_4=\gamma+i\eta$ with $\eta > 0$,
then the first-order invariant (\ref{ini_5}) is satisfied by 
	\begin{equation}
	\label{solution_2}
	{U}(x) = u_2 + \frac{({u}_1-{u}_2)\big(1-{\rm cn}(\mu x,k)\big)}{(\delta+1)+(\delta-1){\rm cn}(\mu x,k)},
	\end{equation}
	where 
	\begin{align*}
	\delta = \frac{\sqrt{({u}_1-\gamma)^2+\eta^2}}{\sqrt{({u}_2-\gamma)^2+\eta^2}},  \quad 
	\mu = \sqrt[4]{[({u}_1-\gamma)^2+\eta^2][({u}_2-\gamma)^2+\eta^2]},
	\end{align*}
	and
	\begin{align*}
	2k^2&=1-\frac{({u}_1-\gamma)({u}_2-\gamma)+\eta^2}{\sqrt{[({u}_1-\gamma)^2+\eta^2][({u}_2-\gamma)^2+\eta^2]}}.
	\end{align*}
\end{proposition}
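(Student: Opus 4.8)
The plan is to verify the two profiles directly, by reducing the first-order invariant $(U')^2+Q(U)=d$ to the standard differential equations satisfied by the Jacobi functions ${\rm sn}$ and ${\rm cn}$, via a Möbius change of the dependent variable. First I would rewrite the invariant as $(U')^2=-(U-u_1)(U-u_2)(U-u_3)(U-u_4)$ using the factorization of $Q(u)-d$. In the first case all roots are real and ordered, so nonnegativity of $(U')^2$ confines the bounded orbit to $U\in[u_2,u_1]$, with turning points at $u_1$ and $u_2$; in the second case $(U-u_3)(U-u_4)=(U-\gamma)^2+\eta^2>0$, so again the orbit lies in $[u_2,u_1]$.

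In each case I would then introduce a new unknown $w$ defined by inverting the proposed formula — a Möbius function of $U$ chosen so that the turning points of $U$ correspond to $w^2=1$ and $w=0$ (first case) or to $w=\pm1$ (second case). Differentiating this relation once, substituting $(U')^2$ from the invariant, and clearing denominators should collapse the quartic in $U$ into $(w')^2=\nu^2(1-w^2)(1-k^2w^2)$ in the first case and $(w')^2=\mu^2(1-w^2)(1-k^2+k^2w^2)$ in the second, which are exactly the equations defining $w={\rm sn}(\nu x,k)$ and $w={\rm cn}(\mu x,k)$; matching coefficients is what identifies $\nu,\mu,k$ (and $\delta$) with the stated expressions. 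Alongside this I would check that the modulus is admissible: in the first case $k^2=\frac{(u_1-u_2)(u_3-u_4)}{(u_1-u_3)(u_2-u_4)}>0$ from the ordering and $1-k^2=\frac{(u_1-u_4)(u_2-u_3)}{(u_1-u_3)(u_2-u_4)}>0$, so $k\in(0,1)$; in the second case the Cauchy--Schwarz inequality $|(u_1-\gamma)(u_2-\gamma)+\eta^2|\le\sqrt{(u_1-\gamma)^2+\eta^2}\,\sqrt{(u_2-\gamma)^2+\eta^2}$, strict unless $u_1=u_2$, yields $2k^2\in(0,2)$, hence $k\in(0,1)$, while $\mu=\big([(u_1-\gamma)^2+\eta^2][(u_2-\gamma)^2+\eta^2]\big)^{1/4}>0$ and $\delta>0$ are manifestly well defined.

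Smoothness, reality, and periodicity of $U$ (with period $2K(k)/\nu$, respectively $4K(k)/\mu$, where $K(k)$ is the complete elliptic integral of the first kind) then follow from the corresponding properties of ${\rm sn}$ and ${\rm cn}$, and tracing the integration back recovers the relations (\ref{relation_u1}) between $\{u_j\}$ and $(b,c,d)$.

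The main obstacle is the purely algebraic core: carrying out the differentiation of the Möbius relation and verifying that the quartic polynomial in $U$ on the right-hand side of the invariant matches the quadratic-in-$w^2$ polynomial in the ${\rm sn}$/${\rm cn}$ equation with precisely the claimed coefficients. This forces one to use the symmetric-function identities (\ref{relation_u1}) and is bookkeeping-heavy though not conceptually deep — it is exactly the explicit computations verified in \cite{CP1}. A secondary point is to pin down the correct branch and sign of the square root so that the solution is the genuine real, bounded periodic orbit rather than an unbounded or complex one; this is handled by matching values at a turning point.
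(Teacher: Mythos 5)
Your approach is correct and is essentially the route the paper takes: the paper offers no proof of Proposition \ref{pro_periodic} itself, deferring to the explicit computations verified in \cite{CP1}, and those computations are precisely the direct substitution you outline, reducing the invariant to the Jacobi equations $(w')^2=\nu^2(1-w^2)(1-k^2w^2)$ and $(w')^2=\mu^2(1-w^2)(1-k^2+k^2w^2)$ and matching coefficients. Your admissibility checks are right ($1-k^2=\frac{(u_1-u_4)(u_2-u_3)}{(u_1-u_3)(u_2-u_4)}\geq 0$ in the real-root case, Cauchy--Schwarz in the complex-root case); the one small imprecision is that nonnegativity of $(U')^2$ permits two bounded intervals, $[u_4,u_3]$ as well as $[u_2,u_1]$ --- which is why the paper records the companion solution (\ref{solution_3}) in a remark --- but this does not affect the verification of (\ref{solution_1}) and (\ref{solution_2}).
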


\begin{remark}
	For periodic solution (\ref{solution_1}), exchanging $u_1 \leftrightarrow u_3$ and  $u_2 \leftrightarrow u_4$ yields another periodic solution:
	\begin{equation}\label{solution_3}
	U(x)=u_2+\frac{(u_2-u_3)(u_2-u_4)}{(u_4-u_2)+(u_3-u_4){\rm sn}^2(\nu x;k)},
	\end{equation}
with the same definition of $\nu$ and $k$.
\end{remark}

\begin{remark}
	The periodic solutions $U(x)$ in the form of (\ref{solution_1}) and (\ref{solution_3}) are located in the intervals $[u_2,u_1]$ and $[u_4,u_3]$, respectively. In both cases, they have the period $L=2K(k)\nu^{-1}$, where $K(k)$ is the complete elliptic integral of the first type.	The periodic solution ${U}(x)$ in the form of (\ref{solution_2}) is located in the interval $[{u}_2,{u}_1]$ and has the period $L=4K(k)\mu^{-1}$. 
\end{remark}

To illustrate the two waveforms in Proposition \ref{pro_periodic}, 
we plot the level curves of 
$$
\mathcal{H}(U,U') = (U')^2 + Q(U)
$$ 
on the phase plane $(U,U') \in \R^2$. Figure \ref{fig_phase}(a) shows a typical phase portrait in the case $c > 0$ and $b \in \left(-\frac{\sqrt{2c^3}}{3 \sqrt{3}},\frac{\sqrt{2c^3}}{3 \sqrt{3}}\right)$, where the saddle point is squeezed between two center points shown by black crosses. The level curves on the phase plane $(U,U')$ 
correspond to the region of $U$, where $Q(U) \leq d$, shown in Figure \ref{fig_phase}(b).
\begin{figure}[h]
	\centering
	\includegraphics[width=3in,height=2.5in]{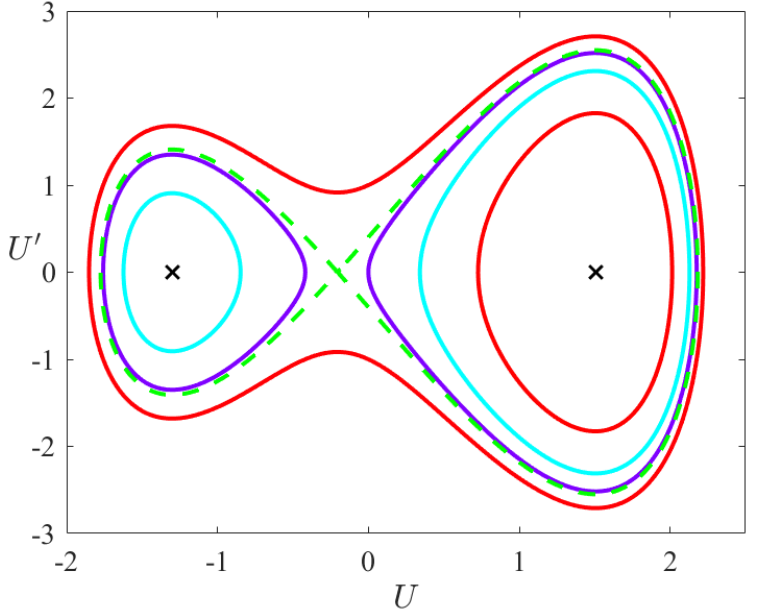}
	\includegraphics[width=3in,height=2.5in]{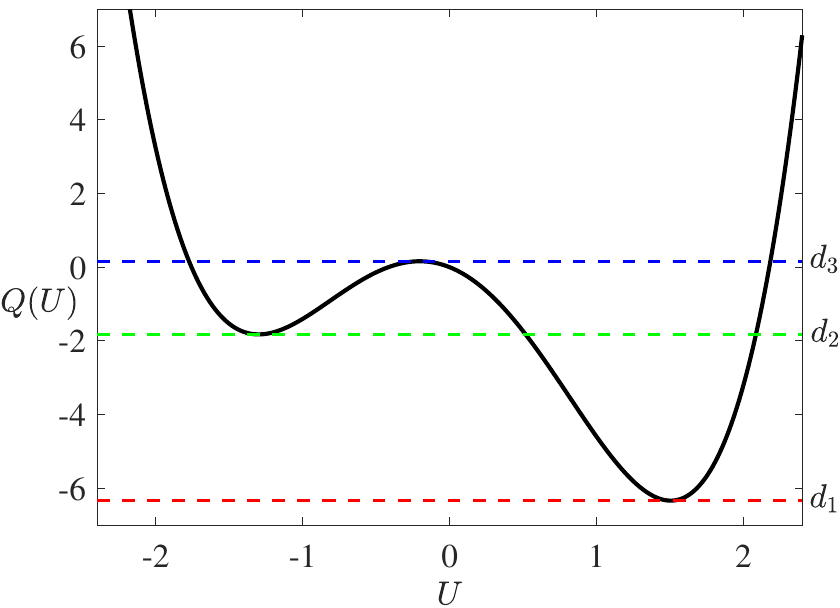}
	\caption{(a) Phase portrait in the phase plane $(U,U')$  for $b = 0.8$ and $c = 4$. (b) Levels of $d$ at the plot of $Q(U)$.}
	\label{fig_phase}
\end{figure}

For $d \in (d_1,d_2)$ and $d \in (d_3,\infty)$, only two real roots 
of $Q(U) = d$ exist and the solution waveform is given by (\ref{solution_2}). 
For $d \in (d_2,d_3)$, four real roots of $Q(U) = d$ exist 
and the solution waveforms are given by (\ref{solution_1}) and (\ref{solution_3}). In the limiting cases, the solution waveforms degenerate as follows. 

\begin{itemize}
	\item If $d = d_1$, then $u_1 = u_2$ in (\ref{solution_2}) and we have the constant solution $U(x) = u_1$.
	
	\item If $d = d_2$, then $u_3 = u_4$ in (\ref{solution_1}) and we have 
	the trigonometric waveform for the periodic solution:
	$$	
	U(x)=u_3+\frac{(u_1-u_3)(u_2-u_3)}{(u_2-u_3)+(u_1-u_2) \sin^2(\nu x)}, \quad \nu=\frac{1}{2}\sqrt{(u_1-u_3)(u_2-u_3)}.
	$$
In addition, we have the constant solution $U(x)=u_3$ from (\ref{solution_3}). 
	
	\item If $d=d_3$, then either $u_2=u_3$ in (\ref{solution_1}) or $\eta = 0$ and $u_2 < \gamma < u_1$ in (\ref{solution_2}). The two cases give two hyperbolic waveforms for the solitary wave solutions:
	$$
	U(x)=u_4+\frac{(u_1-u_4)(u_2-u_4)}{(u_2-u_4)+(u_1-u_2) \tanh^2(\nu x)}, \quad \nu=\frac{1}{2}\sqrt{(u_1-u_2)(u_2-u_4)}
	$$
and 
\begin{align*}
U(x) &= u_4 + \frac{(u_1 - u_4) (u_2 - u_4) (\cosh(2 \nu x) - 1)}{(u_1 - u_4) \cosh(2 \nu x) + (u_1 - 2 u_2 + u_4)} \\
&= u_4 + \frac{(u_1 - u_4) (u_2 - u_4) \tanh^2(\nu x) }{(u_1 - u_2) + (u_2 - u_4) \tanh^2(\nu x)},
\end{align*}
where we have relabeled $u_2 \to u_4$ and $\gamma \to u_2 = u_3$ in the second solution compared to (\ref{solution_2}) and transformed it with $\mu = 2 \nu$ to the form similar to the first solution. In addition, we have the constant solution $U(x)=u_2$ from (\ref{solution_3}). 
\end{itemize}

\section{Lax spectrum for the periodic traveling waves}
\label{sec_eig}

The mKdV equation (\ref{ini_1}) is a compatibility condition of the following system of the linear equations \cite{AKNS1} for $\psi \in \mathbb{C}^2$:
\begin{equation}\label{lax_1}
\left\{
\begin{array}{lr}
\psi_{x}=L(u,\lambda)\psi, \\
\psi_{t}=M(u,\lambda)\psi,
\end{array}
\right.
\end{equation}
where
$$
L(u,\lambda) = \left(
\begin{array}{cc}
\lambda       &u  \\
-u & -\lambda \\
\end{array}
\right)
$$
and 
$$
M(u,\lambda) = \left(
\begin{array}{cc}
-4\lambda^3-2\lambda u^2  &-4\lambda^2u-2\lambda u_x-2u^3-u_{xx}  \\
4\lambda^2u-2\lambda u_x+2u^3+u_{xx} & 4\lambda^3+2\lambda u^2 \\
\end{array}
\right).
$$
Solutions of the linear equations (\ref{lax_1}) for the traveling waves of the mKdV equation (\ref{ini_1}) are related to solutions of the linearized mKdV equation at the traveling waves. These relations are well-known, see, e.g., \cite{DN1}, and are reproduced here for the sake of transparency.

Let $u(x,t) = U(x-ct)$ be the traveling wave solution of the mKdV equation (\ref{ini_1}). Then the linear system (\ref{lax_1}) enjoys the separation of  variables in the form $\psi(x,t) = \Psi(x-ct) e^{\Omega t}$, with $\Psi \in \mathbb{C}^2$ and $\Omega \in \mathbb{C}$ found from the linear system
\begin{equation}
\begin{cases}\label{eig_1}
\Psi'=L(U,\lambda)\Psi, \\
\Omega\Psi = \left[ M(U,\lambda) + c L(U,\lambda) \right] \Psi.
\end{cases}
\end{equation}
The following proposition gives the admissible values of $\Omega$.

\begin{proposition}\label{pro_omega_1}
	The admissible values of $\Omega$ are defined by the characteristic polynomial 
\begin{equation}\label{p_1}
P(\lambda)=16\lambda^6-8c\lambda^4+(c^2+4d)\lambda^2-b^2
\end{equation}
	as $\Omega=\pm \sqrt{P(\lambda)}$.
\end{proposition}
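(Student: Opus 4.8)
The plan is to read the system (\ref{eig_1}) as an overdetermined system for the pair $(\Psi,\Omega)$ and to reduce the question to the spectrum of a single $2\times 2$ matrix. Introduce $A(x;\lambda) := M(U,\lambda) + c\,L(U,\lambda)$, so that (\ref{eig_1}) asks for a nonzero $\Psi$ satisfying $\Psi' = L(U,\lambda)\Psi$ together with $A(x;\lambda)\Psi = \Omega\Psi$ for all $x$; in other words, $\Psi(x)$ must be an eigenvector of $A(x;\lambda)$ with eigenvalue $\Omega$, carried along a solution of the first equation. Hence every admissible $\Omega$ lies in the spectrum of $A(x;\lambda)$, and the first thing to establish is that this spectrum does not move with $x$.

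To see this, I would use the compatibility condition of the Lax pair (\ref{lax_1}), i.e. the zero-curvature equation $\partial_t L - \partial_x M + [L,M] = 0$, evaluated on the traveling wave $u = U(x-ct)$. Since $U$ depends on $(x,t)$ only through $x-ct$, one has $\partial_t[L(U,\lambda)] = -c\,\partial_x[L(U,\lambda)]$; substituting this into the zero-curvature equation gives $\partial_x M = -c\,\partial_x L + [L,M]$, hence
$$ A' = \partial_x M + c\,\partial_x L = [L,M] = [L,A], $$
using $[L,L] = 0$. If $\Phi(x)$ denotes the fundamental matrix of $\Phi' = L(U,\lambda)\Phi$ with $\Phi(0) = I$, then $\frac{d}{dx}\bigl(\Phi^{-1}A\Phi\bigr) = \Phi^{-1}\bigl(-LA + [L,A] + AL\bigr)\Phi = 0$, so $A(x;\lambda) = \Phi(x)\,A(0;\lambda)\,\Phi(x)^{-1}$ and the eigenvalues of $A(x;\lambda)$ are independent of $x$. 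This conjugation also yields the converse inclusion: if $\Psi(0)$ is an eigenvector of $A(0;\lambda)$ with eigenvalue $\Omega$, then $\Psi(x) := \Phi(x)\Psi(0)$ solves $\Psi' = L\Psi$ and satisfies $A(x;\lambda)\Psi(x) = \Omega\Psi(x)$ for all $x$, so each eigenvalue of $A$ is indeed an admissible $\Omega$.

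It remains to compute the characteristic polynomial of the $2\times 2$ matrix $A$. Directly from the formulas for $L$ and $M$ one has $\operatorname{tr} L = \operatorname{tr} M = 0$, so $\operatorname{tr} A = 0$ and the eigenvalues solve $\Omega^2 + \det A = 0$. I would simplify the entries of $A$ with the profile equations before taking the determinant: equation (\ref{ini_4}) in the form $U'' + 2U^3 - cU = b$ reduces the off-diagonal entries to $A_{12} = -(4\lambda^2 U + 2\lambda U' + b)$ and $A_{21} = 4\lambda^2 U - 2\lambda U' + b$, while $A_{11} = -A_{22} = \lambda\,(c - 4\lambda^2 - 2U^2)$. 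Then $\det A = -A_{11}^2 - A_{12}A_{21} = -A_{11}^2 + (4\lambda^2 U + b)^2 - 4\lambda^2 (U')^2$, and substituting $(U')^2 = d - U^4 + cU^2 + 2bU$ from the first-order invariant (\ref{ini_5}) makes every $U$-dependent term cancel, leaving exactly $\det A = -P(\lambda)$ with $P$ as in (\ref{p_1}). Together with $\operatorname{tr} A = 0$, this gives $\Omega = \pm\sqrt{P(\lambda)}$.

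The determinant computation is the only laborious step, but it is routine once the two profile identities are used to eliminate $U''$ and $(U')^2$; there is nothing subtle in it beyond bookkeeping, and in fact the cancellation of all $U$-dependent terms reconfirms that $\det A$ is $x$-independent. The one place where care is genuinely needed is the derivation of $A' = [L,A]$: the traveling-wave reduction must be applied to the zero-curvature equation as a whole, via $\partial_t \mapsto -c\,\partial_x$, rather than to $L$ and $M$ separately, and one must keep track of $[L, cL] = 0$ so that the $c\,\partial_x L$ contributions cancel correctly.
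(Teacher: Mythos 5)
Your proof is correct, and its computational core coincides with the paper's: both read the second equation of (\ref{eig_1}) as saying that $\Omega$ is an eigenvalue of the traceless $2\times 2$ matrix $A = M(U,\lambda) + cL(U,\lambda)$, so that $\Omega^2 = -\det A$, and both use the profile identities (\ref{ini_4}) and (\ref{ini_5}) to collapse $-\det A$ to $P(\lambda)$. Your order of operations --- substituting $U'' = b - 2U^3 + cU$ into the entries of $A$ \emph{before} taking the determinant --- is slightly cleaner than the paper's, which expands $\det A$ in full first and eliminates $U''$ and $(U')^2$ afterwards, but the arithmetic is identical and your final cancellation checks out. What you add, and the paper leaves implicit, is the verification that the overdetermined system (\ref{eig_1}) is actually compatible: your derivation of $A' = [L,A]$ from the zero-curvature form of (\ref{lax_1}) under the traveling-wave substitution $\partial_t \mapsto -c\,\partial_x$, together with the conjugation $A(x) = \Phi(x)A(0)\Phi(x)^{-1}$, shows both that $\mathrm{spec}\,A(x)$ is independent of $x$ and that each eigenvalue of $A$ is attained by a genuine joint solution $\Psi(x) = \Phi(x)\Psi(0)$. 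The paper obtains the $x$-independence only a posteriori, from the fact that every $U$-dependent term cancels in $\det A$, and does not address the converse inclusion (that every root of $\Omega^2 = P(\lambda)$ is admissible) at all; your version is therefore the more complete argument, at the cost of one extra lemma.
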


\begin{proof}
The second equation in system (\ref{eig_1}) shows that $\Omega$ is an eigenvalue of the following matrix 
	$$
A := \left(
	\begin{array}{cc}
	-4\lambda^3-2\lambda U^2+c\lambda       &-(4\lambda^2U+2U^3+U''-cU+2\lambda U')  \\
	4\lambda^2U+2U^3+U''-cU-2\lambda U' 
	&4\lambda^3+2\lambda U^2-c\lambda \\
	\end{array}
	\right).
	$$
Since the trace of $A$ is zero, $\Omega^2 = P(\lambda) := \det(A)$, which is expanded in powers of $\lambda$ as follows:
	\begin{align*}
P(\lambda) = 16\lambda^6 -8c\lambda^4 -(12U^4+8UU''-4cU^2-4(U')^2-c^2)\lambda^2
-(2U^3+U''-cU)^2.
	\end{align*}
By using (\ref{ini_4}) and (\ref{ini_5}), we rewrite $P(\lambda)$ in the form (\ref{p_1}). 	
\end{proof}

We define {\it the Lax spectrum} of the periodic travelling waves as the set of admissible values of $\lambda$ in the spectral problem $\Psi' = L(U,\lambda) \Psi$ of the system (\ref{eig_1}), for which $\Psi \in L^{\infty}(\mathbb{R},\mathbb{C}^2)$. By Floquet theorem, if $U(x+L) = U(x)$ is $L$-periodic, then the bounded eigenfunction $\Psi$ is quasi-periodic 
as $\Psi(x+L) = \Psi(x) e^{i \mu L}$ with $\mu \in \left[-\frac{\pi}{L},\frac{\pi}{L}\right]$ for the values of $\lambda$ defined in the continuous bands of the Lax spectrum. 

To relate the Lax spectrum with the stability spectrum, we define the linearization of the mKdV equation (\ref{ini_1}) at the periodic traveling waves with the profile $U$. By using $u(x,t) = U(x-ct) + \mathfrak{u}(x-ct) e^{\Lambda t}$ and linearizing at $\mathfrak{u}$, we obtain the spectral stability problem in the form
\begin{equation}\label{line_11}
\Lambda \mathfrak{u} + \mathfrak{u}''' + 6 (U^2 \mathfrak{u})' - c \mathfrak{u}' = 0. 
\end{equation} 
{\it The stability spectrum} of the periodic traveling waves is defined as the set of admissible values of $\Lambda$ in the spectral stability problem (\ref{line_11}), for which $\mathfrak{u} \in L^{\infty}(\mathbb{R},\mathbb{C})$. By the same Floquet theorem, if $U(x+L) = U(x)$ is $L$-periodic, then the bounded eigenfunction $\mathfrak{u}$ is quasi-periodic as $\mathfrak{u}(x+L) = \mathfrak{u}(x) e^{i \theta L}$ with $\theta \in \left[-\frac{\pi}{L},\frac{\pi}{L}\right]$  for the values of $\Lambda$ defined in the continuous bands of the stability spectrum. The bands of $\Lambda$ in the stability spectrum are related to the bands of $\lambda$ in the Lax spectrum due to the squared eigenfunction relation \cite{DN1}.

The following proposition gives the  relation between eigenfunctions of the spectral stability problem (\ref{line_11}) and the squared eigenfunctions of the linear system (\ref{eig_1}). 

\begin{proposition}
	Let $\Sigma \subset \mathbb{C}$ be the Lax spectrum and 
	$\Psi=(p,q)^{\mathrm{T}} \in L^{\infty}(\mathbb{R},\mathbb{C}^2)$ be the eigenfunction of the linear system (\ref{eig_1}) for an admissible value of $\lambda \in \Sigma$. Then, 
	$\mathfrak{u} := p^2 - q^2 \in L^{\infty}(\mathbb{R},\mathbb{C})$ is the eigenfunction of the spectral stability problem (\ref{line_11}) with 
\begin{equation}
\label{Lambda}
	\Lambda = 2\Omega = \pm 2\sqrt{P(\lambda)}.
\end{equation}
\label{prop-squared}
\end{proposition}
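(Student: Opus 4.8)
The plan is to run the classical squared-eigenfunction computation: write the first equation of (\ref{eig_1}) in components, form the three natural quadratic combinations of $p$ and $q$, and show that $f:=p^2-q^2$ satisfies a closed third-order ODE which is precisely (\ref{line_11}) once $\Lambda$ is identified with $2\Omega$.

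First I would set $\Psi=(p,q)^{\mathrm T}$ and record $p'=\lambda p+Uq$, $q'=-Up-\lambda q$ from $\Psi'=L(U,\lambda)\Psi$. With $f:=p^2-q^2$, $g:=p^2+q^2$, $h:=pq$, a one-line differentiation gives the closed linear system
\begin{equation*}
f'=2\lambda g+4Uh,\qquad g'=2\lambda f,\qquad h'=-Uf .
\end{equation*}
Differentiating $f$ twice more and eliminating $g'$ and $h'$ with this system yields $f''=(4\lambda^2-4U^2)f+4U'h$ and then $f'''=(4\lambda^2-4U^2)f'-12UU'f+4U''h$. Adding $6(U^2f)'=12UU'f+6U^2f'$, subtracting $cf'$, and substituting $f'=2\lambda g+4Uh$, the $UU'f$ terms cancel and everything collapses to
\begin{equation*}
f'''+6(U^2f)'-cf'=(8\lambda^3+4\lambda U^2-2c\lambda)\,g+\bigl(16\lambda^2U+8U^3-4cU+4U''\bigr)h ,
\end{equation*}
whose $h$-coefficient reduces to the constant $4b$ by the profile equation (\ref{ini_4}).

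It then remains to identify the right-hand side with $-\Lambda f=-2\Omega f$. For this I would use the second equation of (\ref{eig_1}), i.e.\ that $\Psi$ is an eigenvector of the matrix $A=M(U,\lambda)+cL(U,\lambda)$ appearing in the proof of Proposition \ref{pro_omega_1}, and simplify the entries of $A$ with $2U^3+U''-cU=b$. Forming $\Omega f=p\,(A\Psi)_1-q\,(A\Psi)_2$ and using that the diagonal entries of $A$ are opposite, one gets $\Omega f=(c\lambda-4\lambda^3-2\lambda U^2)g-(8\lambda^2U+2b)h$, so that $-2\Omega f$ is exactly the expression displayed above. Hence $2\Omega f+f'''+6(U^2f)'-cf'=0$, i.e.\ $\mathfrak u=f$ solves (\ref{line_11}) with $\Lambda=2\Omega=\pm2\sqrt{P(\lambda)}$; boundedness $\mathfrak u\in L^\infty(\mathbb R,\mathbb C)$ is immediate from $p,q\in L^\infty(\mathbb R)$, and $\mathfrak u\not\equiv0$ because $p\equiv\pm q$ would force $U$ to be constant.

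The argument is entirely elementary and I expect no conceptual obstacle. The only delicate point is the bookkeeping that eliminates $g$ and $h$ and the correct use of (\ref{ini_4}) (equivalently its first integral (\ref{ini_5})) to turn the $h$-coefficient into the constant $4b$: leaving a stray $U$-dependent term there is the natural pitfall, and since that term must be matched term-by-term against the $\Omega f$ expression, any sign or factor slip shows up at once.
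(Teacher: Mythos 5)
Your proof is correct and is essentially the paper's own squared-eigenfunction computation: the paper differentiates $p^2$ and $q^2$ separately to third order and subtracts at the end, while you run the identical calculation through the closed system for $f=p^2-q^2$, $g=p^2+q^2$, $h=pq$, which is only a tidier bookkeeping of the same steps. One slip in your prose: by (\ref{ini_4}) the $h$-coefficient $16\lambda^2 U+8U^3-4cU+4U''$ reduces to $16\lambda^2 U+4b$, not to the constant $4b$, but this is exactly what your identity $-2\Omega f=(8\lambda^3+4\lambda U^2-2c\lambda)g+(16\lambda^2 U+4b)h$ requires, so the matching and the conclusion stand.
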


\begin{proof}
	Given that 	$\Psi=(p,q)^{\mathrm{T}}$, we rewrite $\Psi'=L(U,\lambda)\Psi$ into 
	\begin{equation*}
	\begin{cases} 
	p'=\lambda p+Uq,\\
	q'=-Up-\lambda q,
	\end{cases}
	\end{equation*}
from which we obtain
	\begin{equation*}
	\begin{cases}
	p'' &=\lambda p'+U'q+Uq'\\
	&=\lambda^2 p + U'q-U^2p,\\
	q'' &=-U'P-Up'-\lambda q' \\
	&=\lambda^2 q-U^2q-U'p
	\end{cases}	
	\end{equation*}
and
	\begin{equation*}
	\begin{cases}
	p'''&=\lambda^2 p'+U''q+U'q'-2UU'p-U^2p'\\
	&=\lambda^3p+\lambda^2Uq+U''q-3UU'p-\lambda U'q-\lambda U^2p-U^3q, \\
	q'''&=\lambda^2 q'-U''p-U'p'-2UU'q-U^2q'\\
	&=-\lambda^3q-\lambda^2Up-U''p-3UU'q-\lambda U'p+\lambda U^2q+U^3p.
	\end{cases}	
	\end{equation*}
This yields with explicit computations:
	\begin{equation*}
	\begin{aligned}
	\quad(p^2)''' &= 2pp'''+6p'p''\\
	&=8\lambda^3 p^2+8\lambda^2 Upq+4\lambda U'pq-8\lambda U^2p^2
	-8 U^3pq+6UU'q^2-6UU'p^2+2U''pq
	\end{aligned}	
	\end{equation*}
	and
	\begin{equation*}
	\begin{aligned}
	\quad(q^2)''' &= 2pp'''+6p'p''\\
	&=-8\lambda^3 q^2-8\lambda^2 Upq+4\lambda U'pq+8\lambda U^2q^2
	+8 U^3pq-6UU'q^2+6UU'p^2-2U''pq.
	\end{aligned}	
	\end{equation*}
Combining with the second equation of (\ref{eig_1}), we obtain
	\begin{equation*}
	\begin{aligned}
	&\quad(p^2)'''+12UU'p^2+6U^2(p^2)'-c(p^2)'\\
	&=8\lambda^3 p^2+8\lambda^2 Upq+4\lambda U'pq
	+4\lambda U^2p^2+4U^3pq+2U''pq+6UU'q^2+6UU'p^2\\
	&=-2\Omega p^2+6UU'(p^2+q^2)
	\end{aligned}	
	\end{equation*}
	and 
	\begin{equation*}
	\begin{aligned}
	&\quad(q^2)'''+12UU'q^2+6U^2(q^2)'-c(q^2)'\\
	&=-8\lambda^3 q^2-8\lambda^2 Upq+4\lambda U'pq
	-4\lambda U^2q^2-4U^3pq-2U''pq+6UU'q^2+6UU'p^2\\
	&=-2\Omega q^2+6UU'(p^2+q^2),
	\end{aligned}	
	\end{equation*}
which verify that  
	\begin{equation*}
\mathfrak{u}''' + 12 UU' \mathfrak{u} + 6U^2\mathfrak{u}' - c \mathfrak{u}' = -2\Omega \mathfrak{u},
	\end{equation*} 
for $\mathfrak{u} := p^2 - q^2$. This equation is equivalent to (\ref{line_11}) with $\Lambda = 2 \Omega$, whereas the relation $\Omega = \pm \sqrt{P(\lambda)}$ is established in Proposition \ref{pro_omega_1}.
\end{proof}

\section{Lax and stability spectra: analytical results}
\label{sec_laxpspec}

Although the exact location of the Lax spectrum for the periodic traveling waves is not known, the symmetry of the Lax spectrum with respect to reflection about $\lambda = 0$ and about the real axis follow the symmetry of $L(u,\lambda)$ in the linear system (\ref{lax_1}). The following proposition states these properties of the Lax spectrum.

\begin{proposition}\label{proposition_11}
	Let $\lambda\in \mathbb{C}$ be an eigenvalue of the spectral problem $\psi_{x}=L(u,\lambda)\psi$ with the eigenfunction $\psi = (p,q)^T \in L^{\infty}(\mathbb{R},\mathbb{C}^2)$. Then, $-\lambda$ is also an eigenvalue with the eigenfunction $\psi = (q,-p)^T \in L^{\infty}(\mathbb{R},\mathbb{C}^2)$. Moreover, if $\lambda \notin \R$, then $\bar{\lambda}$ is also an eigenvalue with the eigenfunction $\psi = (\bar{p},\bar{q})^T \in L^{\infty}(\mathbb{R},\mathbb{C}^2)$. 
\end{proposition}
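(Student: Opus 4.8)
The plan is to exploit two discrete symmetries of the coefficient matrix $L(u,\lambda)$ in the spectral problem $\psi_x = L(u,\lambda)\psi$, keeping in mind that the potential $u = U$ is real-valued. For the reflection $\lambda \mapsto -\lambda$, I would first record the elementary matrix identity $S\, L(u,\lambda)\, S^{-1} = L(u,-\lambda)$, where $S = \left(\begin{smallmatrix} 0 & 1 \\ -1 & 0\end{smallmatrix}\right)$, which follows in one line from the explicit form of $L$. Then, given an eigenfunction $\psi = (p,q)^{\mathrm T} \in L^{\infty}(\mathbb{R},\mathbb{C}^2)$ for $\lambda$, I set $\tilde\psi := S\psi = (q,-p)^{\mathrm T}$ and differentiate: $\tilde\psi_x = S\psi_x = S L(u,\lambda)\psi = \bigl(S L(u,\lambda) S^{-1}\bigr)\tilde\psi = L(u,-\lambda)\tilde\psi$, so that $-\lambda$ is an eigenvalue with eigenfunction $(q,-p)^{\mathrm T}$. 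Equivalently, one checks the two scalar equations $\tilde p_x = -\lambda\tilde p + u\tilde q$ and $\tilde q_x = -u\tilde p + \lambda\tilde q$ directly from the componentwise form $p_x = \lambda p + u q$, $q_x = -u p - \lambda q$ of the original system, which already appears in the proof of Proposition~\ref{prop-squared}.

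For the reflection about the real axis, I would use that the entries of $L(u,\lambda)$ are real apart from the $\pm\lambda$ on the diagonal, so that $\overline{L(u,\lambda)} = L(u,\bar\lambda)$ since $u$ is real. Taking the complex conjugate of $\psi_x = L(u,\lambda)\psi$ componentwise gives $\overline{\psi}_x = L(u,\bar\lambda)\overline{\psi}$; hence if $\psi = (p,q)^{\mathrm T}$ is an eigenfunction for $\lambda$, then $\overline{\psi} = (\bar p,\bar q)^{\mathrm T}$ is an eigenfunction for $\bar\lambda$, which is a genuinely different spectral point precisely when $\lambda \notin \mathbb{R}$.

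I do not expect a genuine obstacle: both assertions reduce to a routine matrix/ODE verification. The only point requiring a word of care is the $L^{\infty}$ membership of the transformed eigenfunctions, but this is immediate because $\psi \mapsto S\psi$ and $\psi \mapsto \overline{\psi}$ are pointwise isometries of $\mathbb{C}^2$, so the sup-norm bound on $\psi$ transfers verbatim; consequently the Lax spectrum is invariant under $\lambda \mapsto -\lambda$ and under complex conjugation, as claimed.
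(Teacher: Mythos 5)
Your proof is correct and follows essentially the same route as the paper: the paper verifies both symmetries by writing the system componentwise and checking directly that $(q,-p)^{\mathrm T}$ solves it with $-\lambda$ and that $(\bar p,\bar q)^{\mathrm T}$ solves it with $\bar\lambda$ using $u=\bar u$. Your packaging of the first symmetry as the similarity $S L(u,\lambda) S^{-1}=L(u,-\lambda)$ is only a cosmetic reformulation of the same computation, and your remark on the preservation of the $L^\infty$ bound is a welcome (if implicit in the paper) detail.
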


\begin{proof}
	We rewrite $\psi_{x}=L(u,\lambda)\psi$ in the form:
	\begin{align*}
	\left\{ \begin{array}{l} p_x=\lambda p+uq, \\
	q_x=-up-\lambda q.
	\end{array} \right.
	\end{align*}	
	This implies  
	\begin{align*}
	\left\{ \begin{array}{l} 
	q_x=(-\lambda)q+u(-p), \\
	-p_x=-\lambda p-uq,
	\end{array} \right.
	\end{align*}
	so that $\psi := (q,-p)^T$ is also a solution of   $\psi_{x}=L(u,\lambda)\psi$ with eigenvalue $-\lambda$.
	Furthermore, taking the complex-conjugate transform yields
	\begin{align*}
	\left\{ \begin{array}{l} 
	\bar{p}_x=\bar{\lambda} \bar{p}+\bar{u}\bar{q}, \\
	\bar{q}_x=-\bar{u}\bar{p}-\bar{\lambda}\bar{q}.
	\end{array} \right.
	\end{align*}
	Since $u=\bar{u}$ and $\lambda \neq \bar{\lambda}$, then 
	$\psi :=(\bar{p},\bar{q})^T$ is also a solution of  $\psi_{x}=L(u,\lambda)\psi$ with eigenvalue $\bar{\lambda}$.	
\end{proof}

Roots of the characteristic polynomial $P(\lambda)$ in (\ref{p_1}) can be enumerated as $\{ \pm\lambda_1, \pm\lambda_2, \pm\lambda_3 \}$. It was found in \cite{Kam1,Kam2,Kam3,Kam4} that roots of $P(\lambda)$ are related to roots $\{ u_1,u_2,u_3.u_4 \}$ of $Q(u) = d$ in (\ref{ini_5}). These relations were verified for the mKdV equation in \cite{CP1} with a direct proof, hence we state the relations without further details: 
	\begin{align}
	\label{rel-lambda-u}
	\lambda_1=\frac{1}{2}(u_1+u_2), \qquad 
	~\lambda_2=\frac{1}{2}(u_1+u_3), \qquad 
	~\lambda_3=\frac{1}{2}(u_2+u_3).
	\end{align}
It was shown in \cite{CP1} that spectral bands of the Lax spectrum for the periodic traveling waves outside $i \R$ are connected between the roots 
$\{ \pm\lambda_1, \pm\lambda_2, \pm\lambda_3 \}$. In the next two theorems,  we derive the stability criterion for the waveform (\ref{solution_1}) 
and the instability criterion for the waveform (\ref{solution_2}) in the spectral stability problem (\ref{line_11}). It relies on the location of the spectral bands between the roots of $P(\lambda)$ in addition to the spectral bands on $i \mathbb{R}$, which is assumed here and verified numerically in Section \ref{sec_lax_periodic}, as well as the squared eigenfunction relation of Proposition \ref{prop-squared}.

\begin{theorem}
	\label{theorem-stab}
For the waveform (\ref{solution_1}) of Proposition \ref{pro_periodic}, assume that the Lax spectrum is located on 
\begin{equation}
\label{Lax-dn}
	i \mathbb{R} \cup [-\lambda_1,-\lambda_2] \cup [-|\lambda_3|,|\lambda_3|] \cup [\lambda_2,\lambda_1].
\end{equation}
Then the stability spectrum is $i \mathbb{R}$.
\end{theorem}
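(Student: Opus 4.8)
The plan is to combine the squared-eigenfunction relation of Proposition \ref{prop-squared} with the assumed location of the Lax spectrum \eqref{Lax-dn} and the symmetries of Proposition \ref{proposition_11}, and then to argue that the map $\lambda \mapsto \Lambda = 2\sqrt{P(\lambda)}$ sends each of the listed bands into $i\mathbb{R}$, while no other part of the stability spectrum can exist. First I would record the sign of $P(\lambda)$ on each band. On the imaginary axis $\lambda = is$, $s \in \mathbb{R}$, the polynomial $P(is) = -16 s^6 - 8 c s^4 - (c^2 + 4d) s^2 - b^2$ — here one must be careful about the signs of $c$ and $d$, but using the relations \eqref{relation_u1}, \eqref{rel-lambda-u} that express $c, d, b$ through the real roots $u_1, u_2, u_3, u_4$ with $u_4 \le u_3 \le u_2 \le u_1$ and $\sum u_j = 0$, one checks that $P(is) \le 0$ for all $s \in \mathbb{R}$, so $\Lambda = 2\sqrt{P(is)} \in i\mathbb{R}$. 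On the real bands $[\lambda_2, \lambda_1]$, $[-\lambda_1, -\lambda_2]$, and $[-|\lambda_3|, |\lambda_3|]$, the key point is that $P(\lambda)$ factors as $16(\lambda^2 - \lambda_1^2)(\lambda^2 - \lambda_2^2)(\lambda^2 - \lambda_3^2)$, and one verifies from \eqref{rel-lambda-u} the ordering $\lambda_3^2 \le \lambda_2^2 \le \lambda_1^2$ (using $u_4 \le u_3 \le u_2 \le u_1$, so $|u_1 + u_2| \ge |u_1 + u_3| \ge |u_2 + u_3|$ once one accounts for $u_1 + u_2 \ge 0 \ge u_3 + u_4$ etc.). Consequently on each of these three intervals the product $(\lambda^2 - \lambda_1^2)(\lambda^2 - \lambda_2^2)(\lambda^2 - \lambda_3^2)$ is $\le 0$: on $[-|\lambda_3|,|\lambda_3|]$ all three factors are $\le 0$ giving a non-positive product, and on $[\lambda_2,\lambda_1]$ (resp. its reflection) two factors are negative and one positive, again non-positive. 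Hence $P(\lambda) \le 0$ on all four bands and $\Lambda = \pm 2\sqrt{P(\lambda)} \in i\mathbb{R}$ in every case.

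Second, I would argue that the stability spectrum is contained in the image of the Lax spectrum under $\Lambda = 2\Omega$, so that the computation above shows it lies in $i\mathbb{R}$. For this, take any $\Lambda$ in the stability spectrum with bounded eigenfunction $\mathfrak{u}$; one needs that $\mathfrak{u}$ arises (up to the relevant reduction) from a squared eigenfunction $p^2 - q^2$ of the Lax system. The standard argument here is a dimension count: the linear ODE $\Psi' = L(U,\lambda)\Psi$ together with the algebraic constraint from the second line of \eqref{eig_1} fixes $\Omega = \pm\sqrt{P(\lambda)}$, and the three quadratic combinations $p^2$, $q^2$, $pq$ span the solution space of the third-order ODE \eqref{line_11} for the corresponding $\Lambda$; since \eqref{line_11} at a fixed $\Lambda$ is a third-order linear ODE, its solution space is three-dimensional and is therefore spanned by these squared eigenfunctions (coming from the two Lax eigenfunctions at $\lambda$ and their products), so every bounded solution of \eqref{line_11} is a bounded squared-eigenfunction combination. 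One must also use the Floquet/quasi-periodicity correspondence: a bounded $\mathfrak{u}$ with Floquet exponent $\theta$ corresponds to a bounded $\Psi$ with Floquet exponent $\mu$ related by $2\mu = \theta \pmod{2\pi/L}$, so boundedness of $\mathfrak{u}$ forces $\lambda$ into the Lax spectrum. Combined with the first step, this gives $\Lambda \in i\mathbb{R}$, i.e. the stability spectrum is contained in $i\mathbb{R}$.

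Third, for the reverse inclusion $i\mathbb{R} \subseteq$ stability spectrum: the spectral stability problem \eqref{line_11} is a Hamiltonian/skew-adjoint problem (the operator $\partial_x(\partial_x^2 + 6U^2 - c)$ acting between suitable function spaces is the composition of the skew-adjoint $\partial_x$ with a self-adjoint operator), so its $L^2_{\mathrm{per}}$-spectrum over the whole Bloch decomposition is symmetric under $\Lambda \mapsto -\bar\Lambda$; more directly, since $P(\lambda)$ is real on $i\mathbb{R}$ and the Lax spectrum contains all of $i\mathbb{R}$ (a full band, by the assumption in \eqref{Lax-dn}), the continuous image $\{2\sqrt{P(is)} : s \in \mathbb{R}\}$ already fills a connected subset of $i\mathbb{R}$ through the origin, and letting $s$ range over $\mathbb{R}$ the quantity $-P(is) = 16s^6 + 8cs^4 + (c^2+4d)s^2 + b^2$ ranges over $[b^2, \infty)$ at least, and together with the contributions of the other bands one covers all of $i\mathbb{R}$. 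Actually the cleanest route is: the Lax spectrum contains the entire imaginary axis, $P$ is continuous and real-valued there with $P(is)\to-\infty$ as $|s|\to\infty$, hence $2\sqrt{P(is)}$ sweeps out $\{i\tau : \tau^2 \le -4P(0)\}$ at minimum and, combined with the unbounded growth, all of $i\mathbb{R}$; so $i\mathbb{R}$ is in the stability spectrum. Together with the containment from the previous paragraph, the stability spectrum equals $i\mathbb{R}$.

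The main obstacle I anticipate is the second step — rigorously justifying that \emph{every} bounded eigenfunction of \eqref{line_11} comes from a squared Lax eigenfunction, including the degenerate cases where the two Lax eigenfunctions at a given $\lambda$ coincide (i.e. at the band edges / branch points $\pm\lambda_1, \pm\lambda_2, \pm\lambda_3$, where $P(\lambda) = 0$ and $\Lambda = 0$). At those points the three products $p^2, pq, q^2$ may fail to span the full three-dimensional solution space of \eqref{line_11}, and one needs a separate argument (e.g. a perturbation/limiting argument from nearby non-degenerate $\lambda$, or direct analysis of the generalized eigenspace at $\Lambda = 0$ using the symmetries $U'$ and $\partial_c U$, $\partial_b U$ of the stationary equation) to handle the edge of the spectrum. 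A secondary technical point is pinning down the signs in $P(is)$ and the ordering $\lambda_3^2 \le \lambda_2^2 \le \lambda_1^2$ uniformly over the parameter range for waveform \eqref{solution_1}; this is elementary given \eqref{relation_u1} and \eqref{rel-lambda-u} but must be done with care because $\lambda_3$ can have either sign. Everything else is routine once these two points are settled.
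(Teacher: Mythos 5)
Your core computation is the same as the paper's: order the roots to get $0 \le |\lambda_3| \le \lambda_2 \le \lambda_1$, use the factorization $P(\lambda) = 16(\lambda^2-\lambda_1^2)(\lambda^2-\lambda_2^2)(\lambda^2-\lambda_3^2)$ to see that $P \le 0$ on every band of the assumed Lax spectrum (so $\Lambda = \pm 2\sqrt{P(\lambda)} \in i\mathbb{R}$), and then check that the images of the bands jointly cover all of $i\mathbb{R}$. Two small slips in your write-up: on $[\lambda_2,\lambda_1]$ the factor count is one non-positive factor and \emph{two} non-negative ones (your ``two negative, one positive'' would give a positive product); and in your ``cleanest route'' for surjectivity the inequality is reversed --- on the imaginary axis $-P(is) \ge -P(0) = b^2$, so that band only produces $\{i\tau : |\tau| \ge 2|b|\} = \{i\tau: |\tau|\ge 8\lambda_1\lambda_2|\lambda_3|\}$, and the missing segment $[-8\lambda_1\lambda_2|\lambda_3|,\,8\lambda_1\lambda_2|\lambda_3|]\,i$ must be supplied by the real bands, on which $P$ ranges over $[-b^2,0]$. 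This is exactly how the paper splits the covering argument, and your first formulation (``together with the contributions of the other bands'') is the correct one.

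The place where you go beyond the paper is your second step: you try to prove that the stability spectrum is \emph{contained in} the image of the Lax spectrum, via completeness of squared eigenfunctions, and you correctly identify the degenerate branch points as the obstacle. Be aware that the paper does not attempt this either --- it explicitly takes the squared-eigenfunction correspondence of Proposition \ref{prop-squared}, together with the assumed location (\ref{Lax-dn}) of the Lax spectrum, as the framework, and the theorem only asserts what the image of that correspondence is. Your dimension-count sketch as stated is also not quite right: only the single combination $p^2-q^2$ built from an eigenvector of the matrix $A$ with eigenvalue $\Omega$ solves (\ref{line_11}) with $\Lambda = 2\Omega$, so the three products $p^2, pq, q^2$ at one $\lambda$ do not all lie in the solution space of (\ref{line_11}) at a single $\Lambda$; a genuine completeness proof requires tracking both branches $\Omega = \pm\sqrt{P(\lambda)}$ and, as you note, the points where $P(\lambda)=0$. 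Since the theorem is conditional on the assumed Lax spectrum and on the squared-eigenfunction relation, this extra step is not needed to match the paper's result, but if you include it you should either carry it out fully or state it as an assumption, as the paper does.
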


\begin{proof}
	For the waveform (\ref{solution_1}) of Proposition \ref{pro_periodic}, the roots $\{ u_1,u_2,u_3.u_4 \}$ of $Q(u) = d$ are real and ordered as $u_4 \leq u_3 \leq u_2 \leq u_1$. By (\ref{rel-lambda-u}), the roots $\{ \pm\lambda_1, \pm\lambda_2, \pm\lambda_3 \}$ of $P(\lambda)$ are real and ordered as 
	$\lambda_3 \leq \lambda_2 \leq \lambda_1$. Moreover, due to the first relation in (\ref{relation_u1}), we have $u_2 + u_3 = -u_1 - u_4$. If $u_2 + u_3 < 0$, then $|u_2 + u_3| = u_1 + u_4 \leq u_1 + u_3$, which proves that $|\lambda_3| \leq \lambda_2$ even if $\lambda_3 < 0$. Hence, we have  $0 \leq |\lambda_3| \leq \lambda_2 \leq \lambda_1$. 

For $\lambda\in i\mathbb{R}$ in the Lax spectrum (\ref{Lax-dn}), we rewrite the polynomial $P(\lambda)$ in the form:
\begin{align*}
P(\lambda) &= 16(\lambda^2-\lambda_1^2)
	(\lambda^2-{\lambda_2^2})
	({\lambda^2}-{\lambda_3^2}) \\
&= -16 (|\lambda|^2 + \lambda_1^2)
	(|\lambda|^2 + {\lambda_2^2})
	(|\lambda|^2 + {\lambda_3^2}).
\end{align*} 
Since $P(\lambda) < 0$, we have 
$\Lambda = \pm 2\sqrt{P(\lambda)}\in i\mathbb{R}$ by (\ref{Lambda}). 
Moreover, $\lim\limits_{|\lambda| \to \infty} P(\lambda) = -\infty$ and 
$\lim\limits_{\lambda \to 0} P(\lambda) = - 16 \lambda_1^2 \lambda_2^2 \lambda_3^2$ so that 
$$
(-\infty,-8 \lambda_1 \lambda_2 |\lambda_3|] \cup [8 \lambda_1 \lambda_2 |\lambda_3|,\infty) i
$$
belongs to the stability spectrum.

For $\lambda \in [-\lambda_1,-\lambda_2] \cup [-|\lambda_3|,|\lambda_3|] \cup [\lambda_2,\lambda_1]$ in the Lax spectrum (\ref{Lax-dn}), we have
$P(\lambda) \leq 0$ so that $\Lambda = \pm 2\sqrt{P(\lambda)}\in i\mathbb{R}$ by (\ref{Lambda}). Moreover, 
	$\lim\limits_{|\lambda| \to |\lambda_3|} P(\lambda) = 0$ so that 
	$$
	[-8 \lambda_1 \lambda_2 |\lambda_3|,8 \lambda_1 \lambda_2 |\lambda_3|] i
	$$
	also belongs to the stability spectrum. Hence, the image of $\Lambda = \pm 2\sqrt{P(\lambda)}$ covers all $i \mathbb{R}$.	
\end{proof}

\begin{remark}
	The parts of the Lax spectrum (\ref{Lax-dn}) for $\lambda \in  [-\lambda_1,-\lambda_2] \cup  [\lambda_2,\lambda_1]$ cover a subset of $i \mathbb{R}$ for the second time. Regardless, the periodic traveling wave with the waveform (\ref{solution_1}) is classified as spectrally stable.
\end{remark}

\begin{theorem}
	\label{theorem-instab}
	For the waveform (\ref{solution_2}) of Proposition \ref{pro_periodic}, assume that the Lax spectrum is located on 
	\begin{equation}
	\label{Lax-cn}
	i \mathbb{R} \cup [-|\lambda_1|,|\lambda_1|] \cup \Sigma_+ \cup \Sigma_-,
	\end{equation}
where $\Sigma_+$ is the spectral band connecting two complex roots in the set $\{ \pm\lambda_2, \pm\lambda_3 \}$ and $\Sigma_-$ is the reflection of $\Sigma_+$ about $\lambda = 0$. Then the stability spectrum is the union of $i \mathbb{R}$ and the spectral bands which are not contained in $i \mathbb{R}$ and are symmetric about $\Lambda = 0$.
\end{theorem}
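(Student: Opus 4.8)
The plan is to realize the stability spectrum as the image of the Lax spectrum (\ref{Lax-cn}) under the map $\lambda \mapsto \Lambda = \pm 2\sqrt{P(\lambda)}$ of Proposition \ref{prop-squared} and then to dissect that image along the three pieces of (\ref{Lax-cn}), exactly as in the proof of Theorem \ref{theorem-stab}. First I would record the root structure of $P$ for the waveform (\ref{solution_2}): by (\ref{rel-lambda-u}), $\lambda_1 = \tfrac12(u_1+u_2)$ is real, while $\lambda_2 = \tfrac12(u_1+u_3)$ and $\lambda_3 = \tfrac12(u_2+u_3)$ are non-real; moreover $u_1+u_2+u_3+u_4 = 0$ together with $u_3 = \overline{u_4}$ gives $\lambda_3 = -\overline{\lambda_2}$, hence $\lambda_3^2 = \overline{\lambda_2^2}$ and
\begin{equation*}
P(\lambda) = 16(\lambda^2-\lambda_1^2)(\lambda^2-\lambda_2^2)(\lambda^2-\overline{\lambda_2^2}).
\end{equation*}
The crucial observation is that whenever $\lambda^2 \in \mathbb{R}$ (that is, for $\lambda \in \mathbb{R} \cup i\mathbb{R}$) one has $(\lambda^2-\lambda_2^2)(\lambda^2-\overline{\lambda_2^2}) = |\lambda^2-\lambda_2^2|^2 \geq 0$.

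From this I would treat the axis pieces of (\ref{Lax-cn}) as in Theorem \ref{theorem-stab}. For $\lambda \in [-|\lambda_1|,|\lambda_1|]$ we have $\lambda^2-\lambda_1^2 \leq 0$, and for $\lambda = is$ we have $\lambda^2-\lambda_1^2 = -(s^2+\lambda_1^2)\leq 0$; in both cases $P(\lambda) \leq 0$, so $\Lambda \in i\mathbb{R}$ by (\ref{Lambda}). To see these images cover all of $i\mathbb{R}$, note that $P$ vanishes at $\lambda = \pm\lambda_1$, equals $-16\lambda_1^2|\lambda_2|^4$ at $\lambda = 0$, and tends to $-\infty$ as $\lambda \to i\infty$; since $P$ is continuous and $[0,|\lambda_1|]\cup i[0,\infty)$ is connected, $P$ sweeps all of $(-\infty,0]$ there (no monotonicity is needed), so the corresponding $\Lambda = \pm 2\sqrt{P(\lambda)}$ fill $i\mathbb{R}$. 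Hence $i\mathbb{R}$ is contained in the stability spectrum; the limiting case $\lambda_1 = 0$, where $[-|\lambda_1|,|\lambda_1|]$ degenerates to $\{0\}$, is covered by the imaginary-axis piece alone.

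Next I would treat the complex bands $\Sigma_\pm$. Since $P$ is even and $\Sigma_- = -\Sigma_+$, we have $P(\Sigma_-) = P(\Sigma_+)$, so $\Sigma_\pm$ contribute the single set $\mathcal{B} := \{\pm 2\sqrt{P(\lambda)} : \lambda \in \Sigma_+\}$, which is symmetric about $\Lambda = 0$ by construction and is a union of bands through $\Lambda = 0$ because the endpoints of $\Sigma_+$ are roots of $P$; symmetry of $\mathcal{B}$ about the real axis follows in addition from Proposition \ref{proposition_11} together with $P(\overline{\lambda}) = \overline{P(\lambda)}$. To get the equality claimed in the statement one argues, as implicitly in the proof of Theorem \ref{theorem-stab}, that the squared-eigenfunction correspondence of Proposition \ref{prop-squared} exhausts the stability spectrum, so the stability spectrum equals $i\mathbb{R} \cup \mathcal{B}$ exactly.

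The main difficulty is the remaining assertion that $\mathcal{B}$ is not contained in $i\mathbb{R}$ — equivalently $P(\Sigma_+) \not\subseteq (-\infty,0]$ — which is precisely the statement of spectral instability. I would establish it by a local analysis at an endpoint $\lambda_* \in \{\pm\lambda_2,\pm\lambda_3\}$ of $\Sigma_+$: in the non-degenerate case ($u_1 \neq u_2$ and $\eta > 0$) the complex roots of $P$ are simple, so $\lambda_*$ is a square-root branch point of the spectral curve $y^2 = P(\lambda)$, and $\Sigma_+$ leaves $\lambda_*$ along a single arc whose direction is determined by the Floquet discriminant $\Delta(\lambda) = \mathrm{tr}\, M(\lambda)$. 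Writing $\Delta^2 - 4 = R(\lambda) P(\lambda)$ with $R$ entire and $R(\lambda_*) \neq 0$, the tangent of this arc and the tangent of $P^{-1}(\mathbb{R})$ at $\lambda_*$ differ by $\arg R(\lambda_*)$; hence, provided $R(\lambda_*) \notin \mathbb{R}$, the value $P(\lambda)$ is non-real along $\Sigma_+$ arbitrarily near $\lambda_*$ and the band $\mathcal{B}$ leaves $\Lambda = 0$ transversally to $i\mathbb{R}$. Verifying $R(\lambda_*) \notin \mathbb{R}$ — from an explicit elliptic-function form of the monodromy for the waveform (\ref{solution_2}), or from the numerically computed Lax spectrum of Section \ref{sec_lax_periodic} — is the delicate core of the argument. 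As a complementary check, in the figure-$\infty$ configuration one may instead evaluate $P$ at the point $\sigma \in \mathbb{R}$ where $\Sigma_+$ crosses the real axis: there $P(\sigma) = 16(\sigma^2-\lambda_1^2)|\sigma^2-\lambda_2^2|^2$, and showing $\sigma^2 > \lambda_1^2$ produces a genuine real eigenvalue $\Lambda \neq 0$. Either way, the outcome is consistent with the modulational instability of the waveform (\ref{solution_2}) proved in \cite{bronski,BHJ}.
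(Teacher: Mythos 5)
Your treatment of the axis pieces of the Lax spectrum (the factorization $P(\lambda)=16(\lambda^2-\lambda_1^2)|\lambda^2-\lambda_2^2|^2$ for $\lambda\in\R\cup i\R$, the sign analysis, and the covering of $i\R$ by continuity of $P$ on the connected set $[0,|\lambda_1|]\cup i[0,\infty)$) matches the paper's proof and is fine. The problem is the part you yourself flag as ``the delicate core'': showing that the image of $\Sigma_+$ under $\Lambda=\pm 2\sqrt{P(\lambda)}$ is not contained in $i\R$. You reduce this to the condition $R(\lambda_*)\notin\R$ for the entire factor in $\Delta^2-4=R(\lambda)P(\lambda)$, and then explicitly defer its verification to an unspecified elliptic-function computation or to the numerics of Section \ref{sec_lax_periodic}. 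That is not a proof: the instability claim remains conditional on a hypothesis you have not established, and your ``complementary check'' ($\sigma^2>\lambda_1^2$ at a real crossing point) is likewise asserted rather than proved --- indeed the case $\sigma^2\le\lambda_1^2$ genuinely occurs (see Figure \ref{fig_5}) and must be handled, not excluded.

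The paper closes this gap without any information about the monodromy beyond the symmetries of Proposition \ref{proposition_11}. The key observation is that the hypothesis of the theorem --- $\Sigma_+$ is a band connecting two roots of the quadruplet $\{\pm\lambda_2,\pm\bar\lambda_2\}$, and the Lax spectrum is invariant under $\lambda\mapsto-\lambda$ and $\lambda\mapsto\bar\lambda$ --- forces $\Sigma_+$ either to cross $\R$, or to cross $i\R$, or to terminate at $\lambda=0$; this gives exactly four cases. At a crossing point $\lambda_0$ of $\R$ outside $[-|\lambda_1|,|\lambda_1|]$ one has $P(\lambda_0)>0$ and hence a real nonzero $\Lambda$ immediately. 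In the remaining cases the symmetry forces $\Sigma_+$ to meet the axis perpendicularly at $\lambda_0$, and since $P$ has real Taylor coefficients one knows $P'(\lambda_0)\in\R$ (resp. $P'(\lambda_0)\in i\R$ on $i\R$, resp. $P'(0)=0$, $P''(0)\in\R$); a first-order expansion of $P$ along the transverse direction of $\Sigma_+$ then yields ${\rm Re}\,P\le 0$ and ${\rm Im}\,P\neq 0$ at nearby points of $\Sigma_+$, so $\Lambda^2=4P\notin\R$ and $\Lambda\notin i\R$. This is elementary, unconditional, and uses only data already in the theorem's hypotheses, whereas your branch-point/discriminant route imports machinery ($\Delta^2-4=RP$ with $R$ entire and nonvanishing at the band edges) that would itself need justification and still leaves the decisive inequality unverified. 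To repair your proof, replace the local analysis at the band endpoints by the local analysis at the axis crossing of $\Sigma_+$, organized by the four symmetry cases.
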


\begin{proof}
	For the waveform (\ref{solution_2}) of Proposition \ref{pro_periodic}, the roots $\{ u_1,u_2  \}$ of $Q(u) = d$ are real and ordered as $u_2 \leq u_1$, whereas the roots $\{ u_3, u_4\}$ are complex conjugate with $u_3 = \bar{u}_4 = \gamma + i \eta$ and $\eta > 0$. By (\ref{rel-lambda-u}), the roots $\{ \pm \lambda_1 \}$ of $P(\lambda)$ are real, whereas the roots 
	$\{ \pm\lambda_2, \pm\lambda_3 \}$ represent a complex-conjugate quadruplet since 
	$$
	\lambda_3 = \frac{1}{2} (u_2 + u_3) = -\frac{1}{2} (u_1 + u_4) = -\frac{1}{2} (u_1 + \bar{u}_3) = -\bar{\lambda}_2,
	$$
	due to the first relation in (\ref{relation_u1}). Since $\lambda_2 = -\bar{\lambda}_3$, we can rewrite $P(\lambda)$ in the form:
$$
P(\lambda)= 16 (\lambda^2-\lambda_1^2)
[(\lambda^2- {\rm Re}(\lambda_2^2))^2 + ({\rm Im}(\lambda_2^2))^2].
$$

For $\lambda\in i\mathbb{R}$ in the Lax spectrum (\ref{Lax-cn}), 
we have $P(\lambda) < 0$, and since $\lim\limits_{|\lambda| \to \infty} P(\lambda) = -\infty$ and 
$\lim\limits_{\lambda \to 0} P(\lambda) = - 16 \lambda_1^2 \lambda_2^2 \lambda_3^2$, we have by $\Lambda = \pm 2\sqrt{P(\lambda)}$ that 
$$
(-\infty,-8 |\lambda_1| |\lambda_2|^2 ] \cup [8 |\lambda_1| |\lambda_2|^2,\infty) i
$$
belongs to the stability spectrum.

For $\lambda \in [-|\lambda_1|,|\lambda_1|]$ in the Lax spectrum (\ref{Lax-cn}), we have $P(\lambda) \leq 0$ and since 
$\lim\limits_{|\lambda| \to |\lambda_1|} P(\lambda) = 0$, 
we have by $\Lambda = \pm 2\sqrt{P(\lambda)}$ that 
$$
[-8 |\lambda_1| |\lambda_2|^2,8 |\lambda_1| |\lambda_2|^2] i
$$
also belongs to the stability spectrum. Hence, the image of $\Lambda = \pm 2\sqrt{P(\lambda)}$ covers all $i \mathbb{R}$.	

It remains to prove that the image of $\Lambda = \pm 2\sqrt{P(\lambda)}$ for $\lambda \in \Sigma_+$ in the Lax spectrum (\ref{Lax-cn}) is not contained in $i \mathbb{R}$. Since $\Sigma_-$ is a reflection of $\Sigma_+$ about $\lambda = 0$ and $\Lambda = 0$ if $\lambda = \pm \lambda_2$ or $\lambda = \pm \lambda_3$, the image of $\Lambda = \pm 2\sqrt{P(\lambda)}$ for $\lambda \in \Sigma_-$ is a reflection of the image of $\Lambda = \pm 2\sqrt{P(\lambda)}$ for $\lambda \in \Sigma_+$ about $\Lambda = 0$, so that the spectral bands are not contained in $i \mathbb{R}$ and are symmetric about $\Lambda = 0$.

Let $\lambda \in \Sigma_+$. Due to symmetries between $\{ \pm \lambda_2, \pm \bar{\lambda}_2\}$, we may have only four possibilities, for each we prove that there exists $\lambda \in \Sigma_+$ such that $\Lambda = \pm 2\sqrt{P(\lambda)} \notin i \R$.
\begin{itemize} 
\item $\Sigma_+$ connects $\lambda_2$ and $\bar{\lambda}_2$ and crosses $\R$ outside the segment $[0,|\lambda_1|]$. For $\lambda_0 \in \Sigma_+ \cap \R$, $P(\lambda_0) > 0$ so that $\Lambda = \pm 2\sqrt{P(\lambda_0)} \in \R$.\\

\item $\Sigma_+$ connects $\lambda_2$ and $\bar{\lambda}_2$ and crosses $\R$ inside the segment $(0,|\lambda_1|]$. $\Sigma_+$ and $\R$ intersect perpendicularly at $\lambda_0 \in \Sigma_+ \cap \R$ due to the symmetry of 
Lax spectrum in Proposition \ref{proposition_11} and $P(\lambda_0) \leq 0$. Hence, there is $\lambda \in \Sigma_+$ such that $|\lambda - \lambda_0|$ is small and $|{\rm Re}(\lambda - \lambda_0)| \ll |{\rm Im}(\lambda)|$. Since $P'(\lambda_0) \in \R$, we have ${\rm Re}(P(\lambda)) \leq 0$ and ${\rm Im}(P(\lambda)) \neq 0$ for this $\lambda$ so that $\Lambda = \pm 2\sqrt{P(\lambda)} \notin i\R$.\\

\item $\Sigma_+$ connects $\lambda_2$ and $0$ but does not intersect $(\R \cup i\R) \backslash \{0\}$. Since $P(0) < 0$, $P'(0) = 0$, and $P''(0) \in \R$, we have ${\rm Re}(P(\lambda)) < 0$ and ${\rm Im}(P(\lambda)) \neq 0$ for every $\lambda \in \Sigma_+$ with small $|\lambda|$ and ${\rm Re}(\lambda) {\rm Im}(\lambda) \neq 0$ so that $\Lambda = \pm 2\sqrt{P(\lambda)} \notin i\R$.\\

\item $\Sigma_+$ connects $\lambda_2$ and $-\bar{\lambda}_2$ and crosses $i \R$. $\Sigma_+$ and $i \R$ intersect perpendicularly at $\lambda_0 \in \Sigma_+ \cap i\R$ due to the symmetry of 
Lax spectrum in Proposition \ref{proposition_11} and  $P(\lambda_0) < 0$. Hence, there is $\lambda \in \Sigma_+$ such that $|\lambda - \lambda_0|$ is small and $|{\rm Im}(\lambda - \lambda_0)| \ll |{\rm Re}(\lambda)|$. Since $P'(\lambda_0) \in i \R$, we have ${\rm Re}(P(\lambda)) < 0$ and ${\rm Im}(P(\lambda)) \neq 0$ for this $\lambda$ so that $\Lambda = \pm 2\sqrt{P(\lambda)} \notin i\R$.\\
\end{itemize}

The list of four possibilities above is complete.
\end{proof}

\section{Lax and stability spectra: numerical results}
\label{sec_lax_periodic}

We use the Fourier collocation method, see \cite[Chapter 2, p.45]{Yjk}, in order to approximate Lax spectrum of $\Psi' = L(U,\lambda) \Psi$ numerically 
for the traveling waves with the periodic profile $U$. 
This numerical method has been used in  our previous work \cite{CuiP}.
For every $\lambda$ in the Lax spectrum, the stability spectrum is obtained from $\Lambda = \pm 2\sqrt{P(\lambda)}$ as in (\ref{Lambda}). The numerical results verify the assumptions of Theorems \ref{theorem-stab} and \ref{theorem-instab} and illustrate their conclusions.

\subsection{Waveform (\ref{solution_1})}

We take four real roots $\{ u_1, u_2, u_3, u_4 \}$ of $Q(u) = d$ in the particular setting of $u_1=1$, $u_2=0.5$, $u_3=0$, and $u_4 = -u_1 -u_2-u_3 = -1.5$. The Lax spectrum is shown in Figure \ref{fig_1}(a) in agreement with (\ref{Lax-dn}), where the magenta crosses represent the roots of $P(\lambda)$. The stability spectrum shown in Figure \ref{fig_1}(b) is equivalent to $i \R$ in agreement with Theorem \ref{theorem-stab}.

\begin{figure}[htb!]
	\centering
	\subfigure[Lax spectrum in $\lambda$-plane. ]{\includegraphics[width=2.2in,height=1.8in]{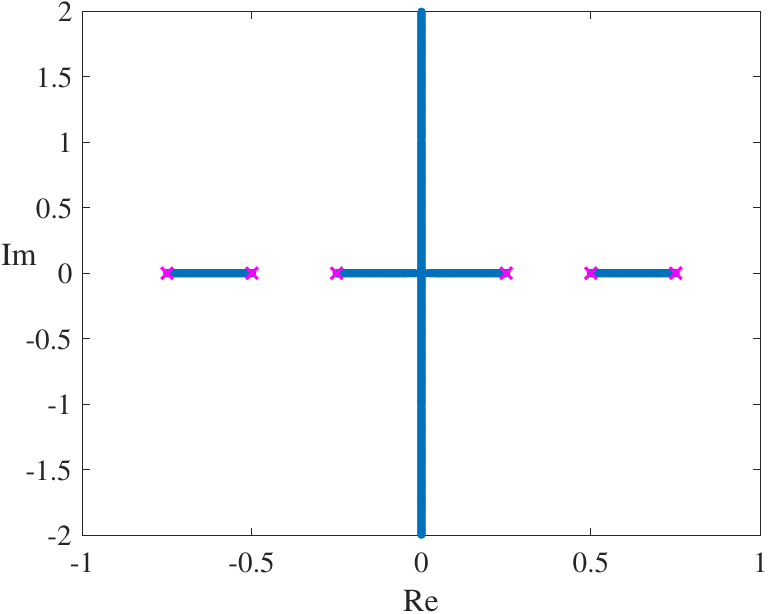}}
	\subfigure[Stability spectrum in  $\Lambda$-plane.]{\includegraphics[width=2.2in,height=1.8in]{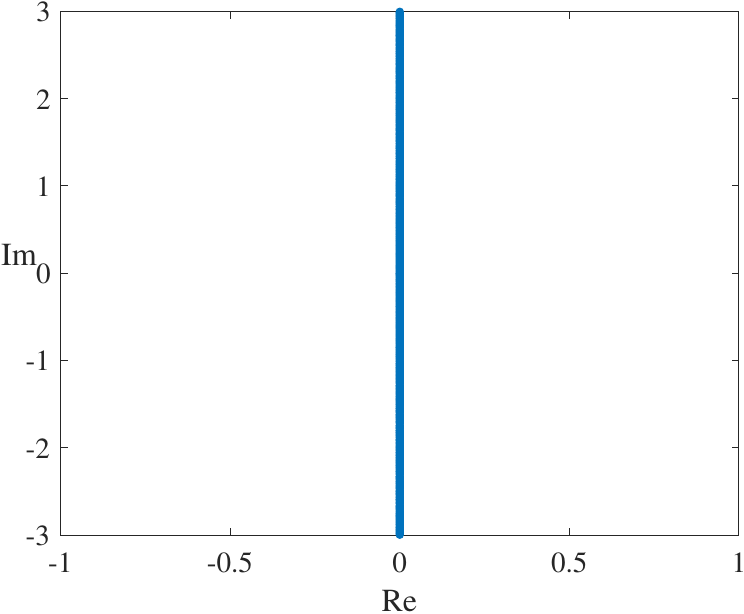}}
	\caption{Numerically computed Lax and stability spectra for the periodic solution with the profile (\ref{solution_1}) for $u_1=1$, $u_2=0.5$, $u_3=0$, and $u_4 = -1.5$.}
	\label{fig_1}
\end{figure}

\subsection{Waveform (\ref{solution_2})}

We take two real roots of $Q(u) = d$ as $u_1=1$ and $u_2=0.2$ and 
the two complex-conjugate roots as $u_3 = \bar{u}_4 = -0.6+0.6i$ 
so that $u_1 + u_2 + u_3 + u_4 = 0$. The Lax spectrum is shown in Figure  \ref{fig_cn2}(a) in agreement with (\ref{Lax-cn}). The stability spectrum is shown in Figure \ref{fig_cn2}(b) in agreement with Theorem \ref{theorem-instab}. The complex bands $\Sigma_{\pm}$ in (\ref{Lax-cn}) are connected across $i\R$ in the Lax spectrum so that $\Sigma_- = \bar{\Sigma}_+$. The stability spectrum is a standard figure-$8$ instability band. 

For a different set of parameter values, the complex bands $\Sigma_{\pm}$ in (\ref{Lax-cn}) are connected across $\R$ in the Lax spectrum so that $\Sigma_- = -\bar{\Sigma}_+$. This is illustrated on Figure  \ref{fig_cn3} for the choice of $u_1=1$, $u_2=-0.2$, and $u_3 = \bar{u}_4 = -0.4+0.2i$. Nevertheless, 
the stability spectrum is still a standard figure-$8$ instability band. 

\begin{figure}[htb!]
	\centering
	\subfigure[Lax spectrum $\lambda$-plane.]{\includegraphics[width=2.2in,height=1.9in]{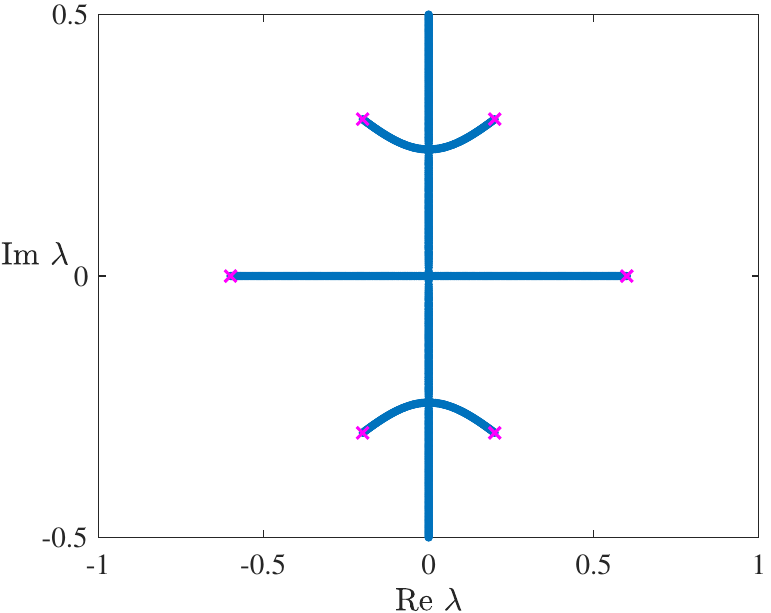}}
	\subfigure[Stability spectrum $\Lambda$-plane.]{\includegraphics[width=2.2in,height=1.9in]{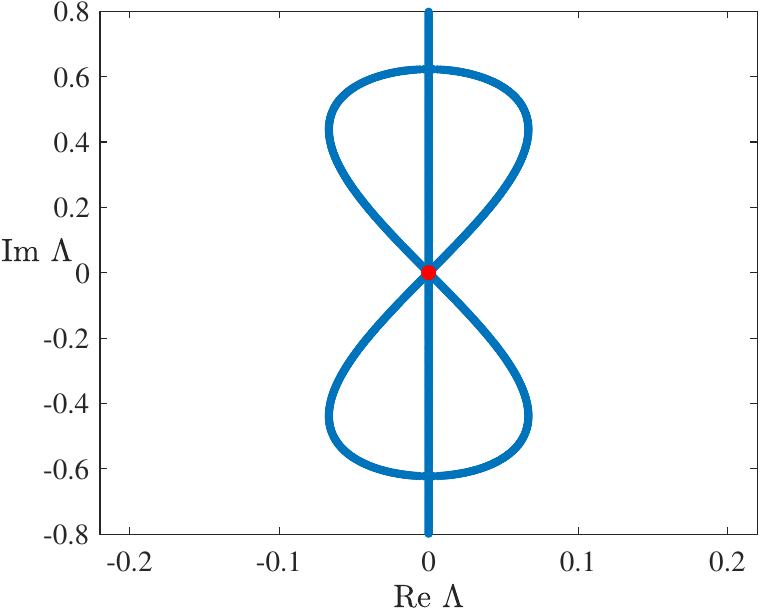}}
	\caption{Numerically computed Lax and stability spectra for the periodic solution with the profile (\ref{solution_2}) for $u_1=1$, $u_2=0.2$, and $u_3 = \bar{u}_4 = -0.6+0.6i$. }
	\label{fig_cn2}
\end{figure}

\begin{figure}[htb!]
	\centering
	\subfigure[Lax spectrum $\lambda$-plane.]{\includegraphics[width=2.2in,height=1.9in]{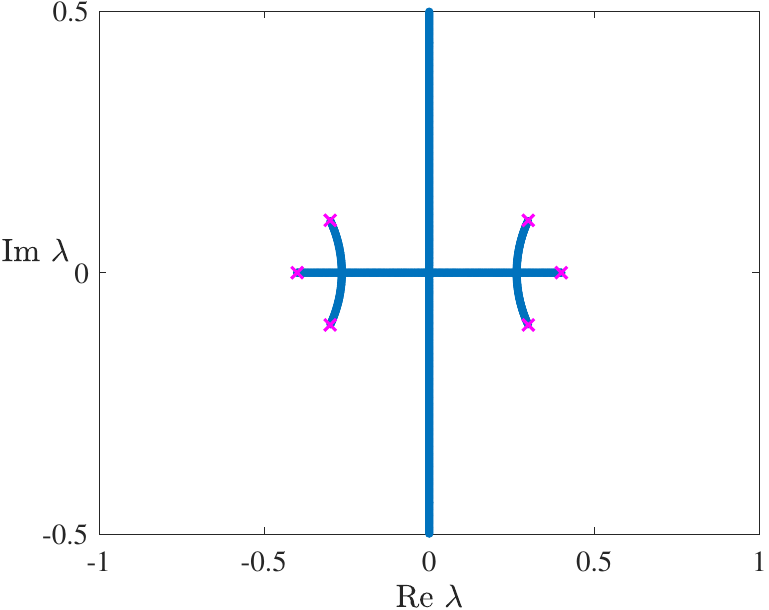}}
	\subfigure[Stability spectrum $\Lambda$-plane.]{\includegraphics[width=2.2in,height=1.9in]{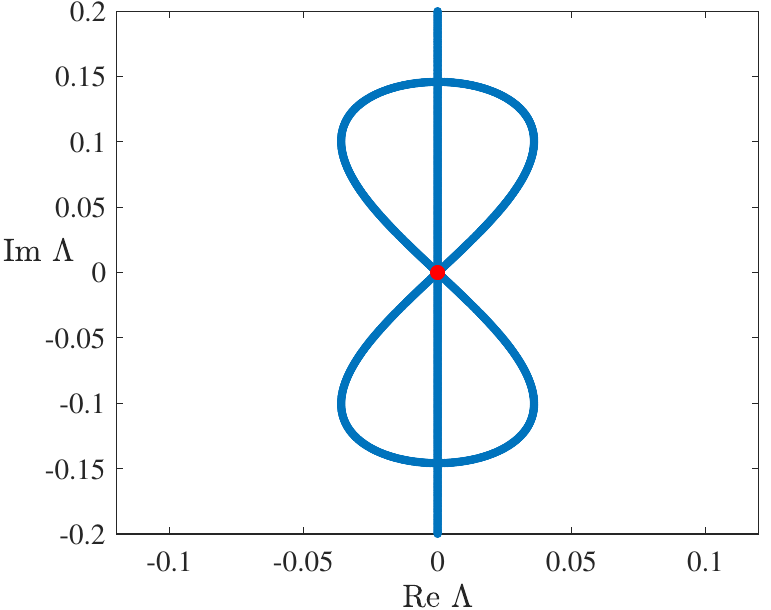}}
	\caption{The same as in Figure \ref{fig_cn2} but for $u_1=1$, $u_2=-0.2$, and 
		$u_3 = \bar{u}_4 = -0.4+0.2i$.}\label{fig_cn3}
\end{figure}

At the first glance, readers may get impression that the cascade of instabilities for the cn-periodic wave (\ref{cn-wave}) shown on Figure \ref{fig_case} is not observed for the periodic waveform (\ref{solution_2}). However, this is just because the two parameter configurations in Figures \ref{fig_cn2} and \ref{fig_cn3} do not represent a general picture.

In order to unfold the cascade of instability bifurcations for the periodic waveform (\ref{solution_2}), we parameterize the roots of $Q(u) = d$ with parameters $\kappa \in (0,1)$ and $\epsilon \in (0,2\kappa)$ as
$$
u_1 = \kappa,\quad  u_2=-\kappa+\epsilon,\quad  \gamma=-\frac{\epsilon}{2}, \quad \eta=\sqrt{1-\kappa^2}.
$$
The periodic waveform (\ref{solution_2}) becomes 
\begin{equation}
\label{solution_2_1}
U(x)=\kappa+\frac{(-2\kappa+\epsilon)\big(1-{\rm cn}(\mu x;k)\big)}{1+\delta+(\delta-1){\rm cn}(\mu x;k)},
\end{equation}
where  
\begin{align*}
\delta&=\frac{\sqrt{\left(\kappa-\frac{3}{2}\epsilon\right)^2+1-\kappa^2}}
{\sqrt{\left(\kappa+\frac{\epsilon}{2}\right)^2+1-\kappa^2}},\\
\mu&=\sqrt[4]{\left[\left(\kappa-\frac{3}{2}\epsilon\right)^2+1-\kappa^2\right]
	\left[\left(\kappa+\frac{\epsilon}{2}\right)^2+1-\kappa^2\right]},\\
2k^2&=1-\frac{\left(\kappa-\frac{\epsilon}{2}\right)\left(-\kappa+\frac{3}{2}\epsilon\right)+1-\kappa^2}{\mu^2}.
\end{align*}
When $\epsilon = 0$, we have $\delta=1$, $\mu=1$, and $k=\kappa$ so that we recover the cnoidal wave with the profile $U(x)=\kappa{\rm cn}(x;\kappa)$ 
as in the periodic solution (\ref{cn-wave}). 

We fix $\kappa=0.97$ and compute numerically the Lax and stability spectra for  different values of $\epsilon \in (0,2\kappa)$. The calculated Lax and stability spectra shown on Figure \ref{fig_4} turns out to be very similar 
to those shown on Figure \ref{fig_case} for the cnoidal wave (\ref{cn-wave}). The only difference from Figure \ref{fig_case} is that figure-$8$ on panel (l) corresponds to the segments $\Sigma_{\pm}$ of the Lax spectrum in Theorem \ref{theorem-instab} crossing the real line on panel (i) and that the line segment $[-|\lambda_1|,|\lambda_1|]$ has a nonzero length. The co-periodic instability (red points on the stability spectrum) arises when the segments $\Sigma_{\pm}$ of the Lax spectrum touch the end points of the line segment $[-|\lambda_1|,|\lambda_1|]$.

\begin{figure}[htb!]
	\centering
	\subfigure[$\epsilon=0.1$]{\includegraphics[width=1.8in,height=1.4in]{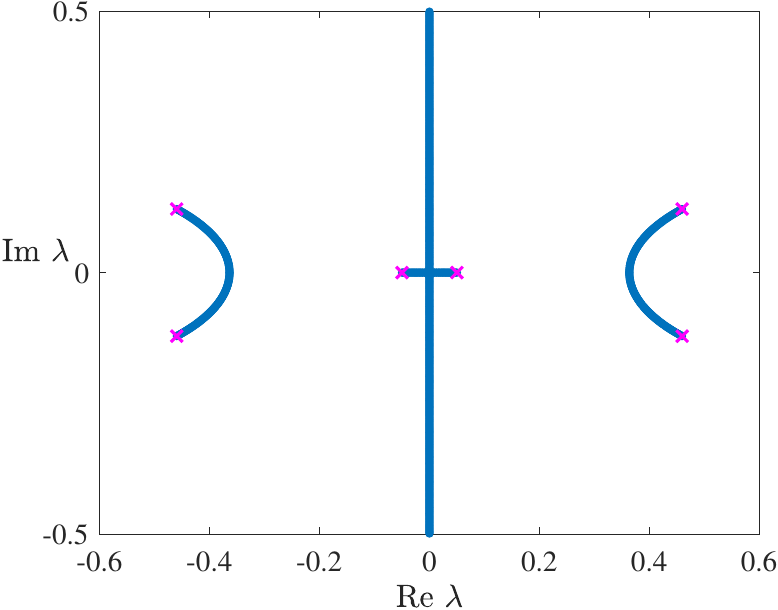}}
	\subfigure[$\epsilon=0.3$]{\includegraphics[width=1.8in,height=1.4in]{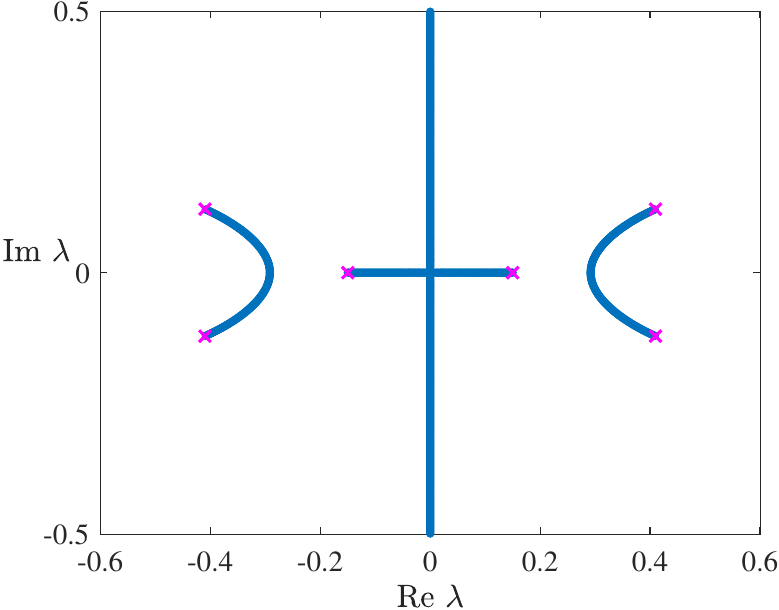}}
		\subfigure[$\epsilon=0.4$]{\includegraphics[width=1.8in,height=1.4in]{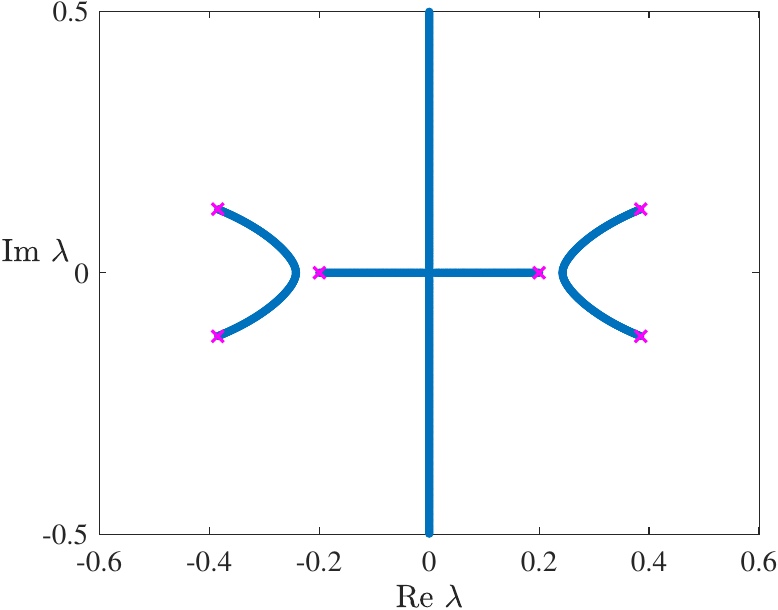}}
	\subfigure[$\epsilon=0.1$]{\includegraphics[width=1.8in,height=1.4in]{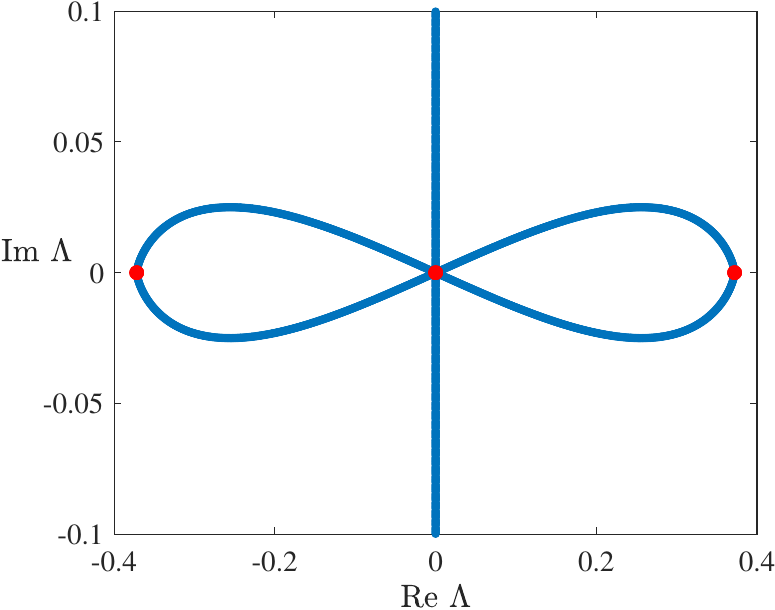}}
	\subfigure[$\epsilon=0.3$]{\includegraphics[width=1.8in,height=1.4in]{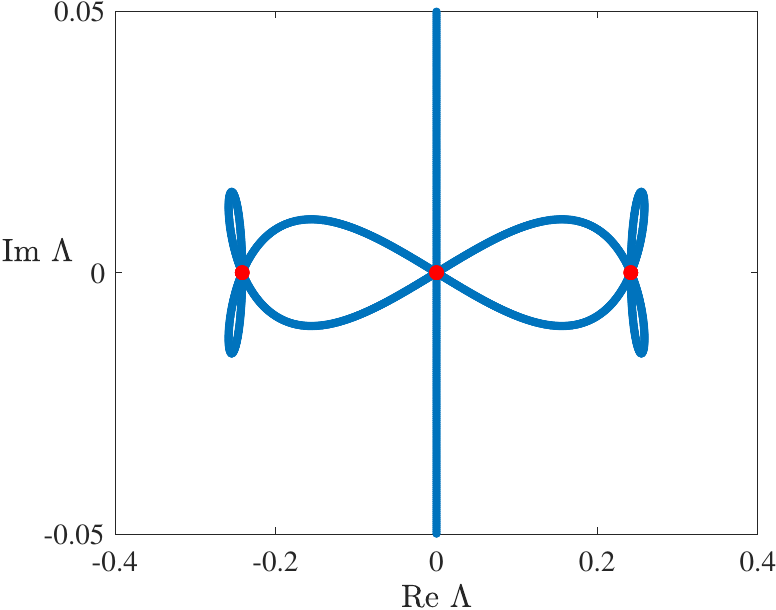}}
		\subfigure[$\epsilon=0.4$]{\includegraphics[width=1.8in,height=1.4in]{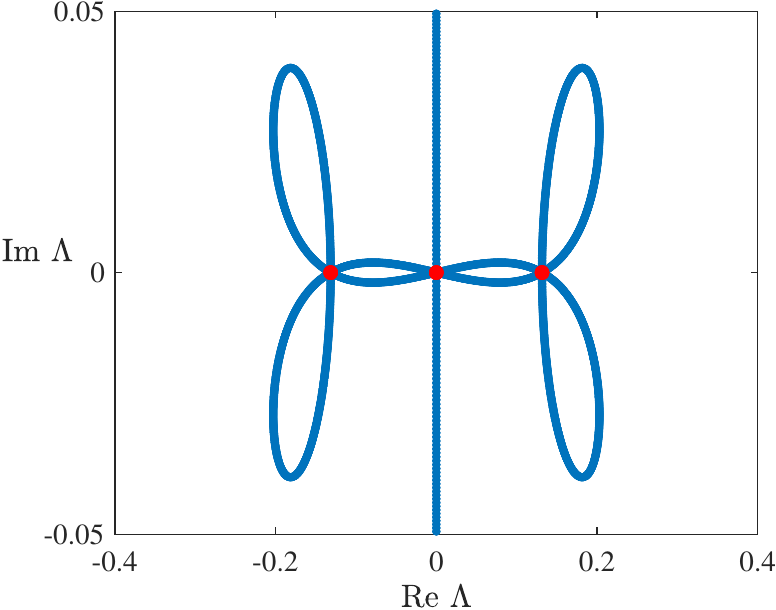}}
	\vspace{-0.2cm}	
		\subfigure[$\epsilon=0.43$]{\includegraphics[width=1.8in,height=1.4in]{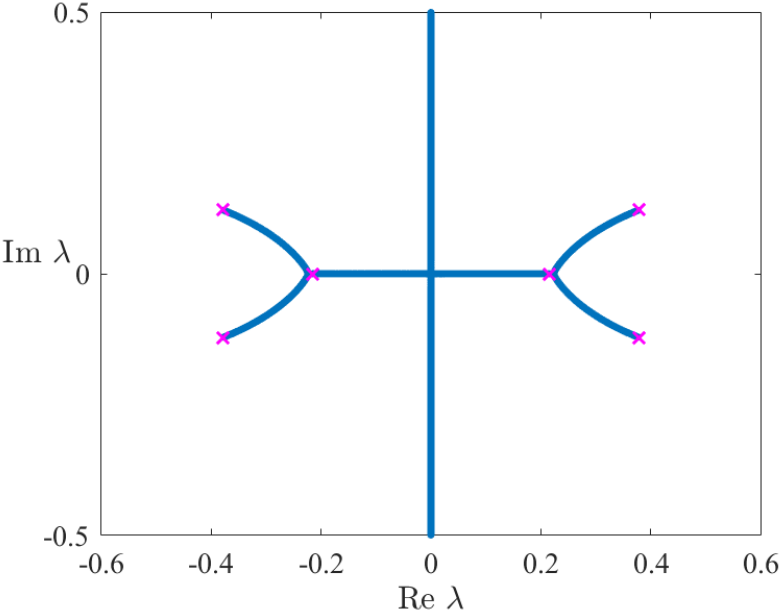}}
			\subfigure[$\epsilon=0.46$]{\includegraphics[width=1.8in,height=1.4in]{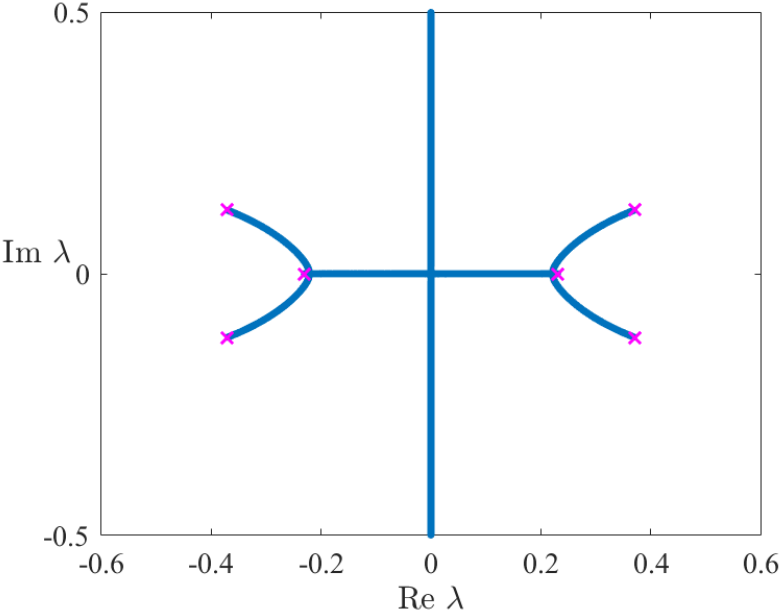}}
			\subfigure[$\epsilon=0.6$]{\includegraphics[width=1.8in,height=1.4in]{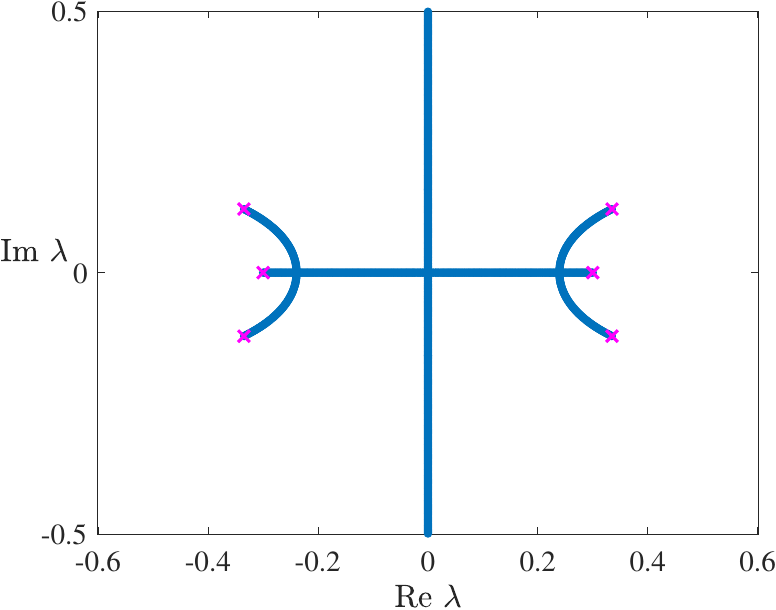}}
			\subfigure[$\epsilon=0.43$]{\includegraphics[width=1.8in,height=1.4in]{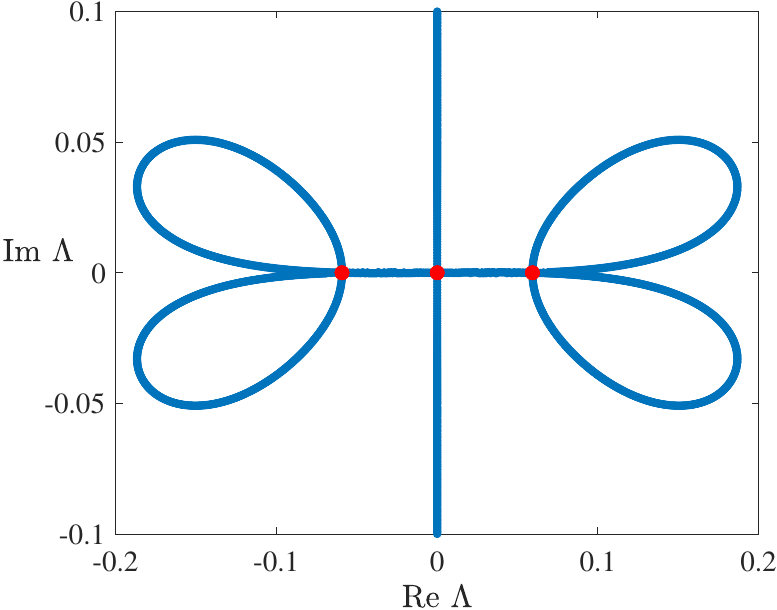}}
				\subfigure[$\epsilon=0.46$]{\includegraphics[width=1.8in,height=1.4in]{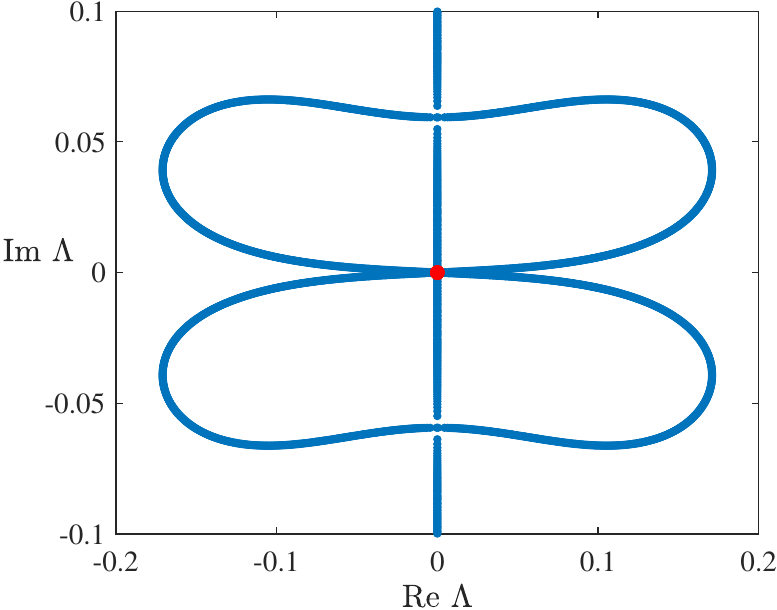}}
			\subfigure[$\epsilon=0.6$]{\includegraphics[width=1.8in,height=1.4in]{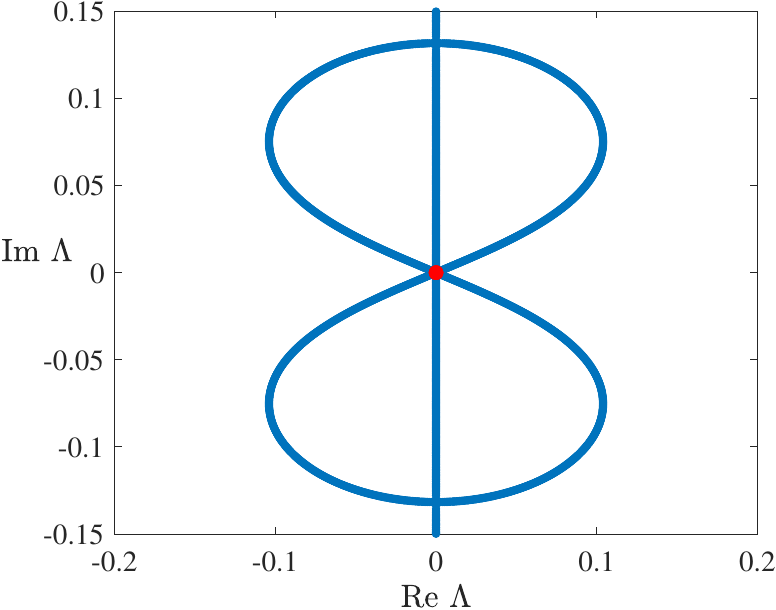}}
	\caption{The Lax and stability spectra for periodic waveform (\ref{solution_2_1}) with $\kappa=0.97$ and different values of $\epsilon$. (a)-(c) and (g)-(i): Lax spectrum in $\lambda$-plane; (d)-(f) and (j)-(l): stability spectrum  in $\Lambda$-plane.}
	\vspace{-0.2cm}
	\label{fig_4}
\end{figure}

If we fix $\kappa=0.9$ and change $\epsilon$ to the negative values as well, 
then the computed Lax and stability spectra shown on Figure \ref{fig_5} features a different transformation of the instability bands. 
The figure-$8$ instability on panel (f) is related to the segments $\Sigma_{\pm}$ of the Lax spectrum in Theorem \ref{theorem-instab} 
crossing the imaginary line on panel (c). The co-periodic instability (red points on the stability spectrum) arises again when the segments $\Sigma_{\pm}$ of the Lax spectrum touch the end points of the line segment $[-|\lambda_1|,|\lambda_1|]$. The co-periodic instability is present when the segments $\Sigma_{\pm}$ intersect the real line outside $[-\lambda_1,\lambda_1]$ and is absent when they intersect the real line inside $[-\lambda_1,\lambda_1]$.

For the cnoidal wave on Figure \ref{fig_case}, the line segment $[-|\lambda_1|,|\lambda_1|]$ shrinks to the origin  and the co-periodic instability arises when the segments $\Sigma_{\pm}$ of the Lax spectrum touch the origin. Hence, all four cases analyzed in the proof of Theorem \ref{theorem-instab} do actually occur in the Lax spectrum for the waveform (\ref{solution_2}) with different parameter values.

\begin{figure}[htb!]
	\centering
	\subfigure[$\epsilon=-0.3$]{\includegraphics[width=1.8in,height=1.4in]{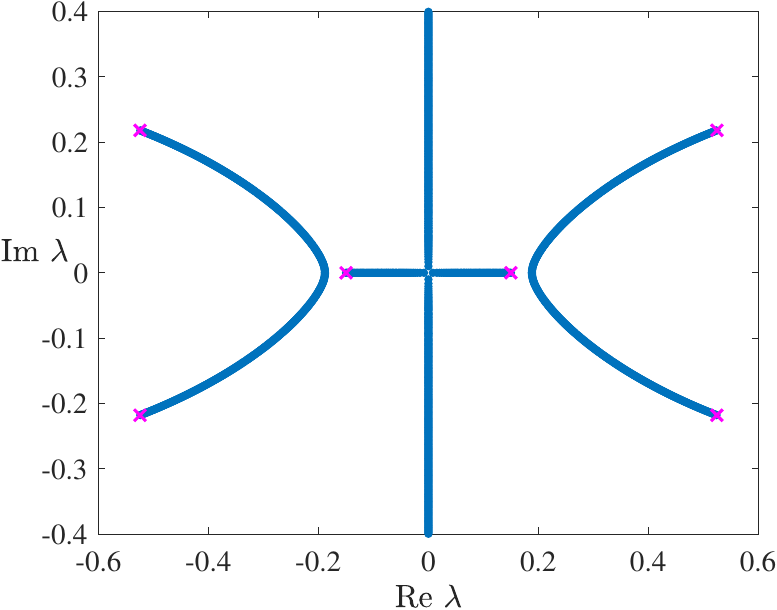}}
	\subfigure[$\epsilon=-0.05$]{\includegraphics[width=1.8in,height=1.4in]{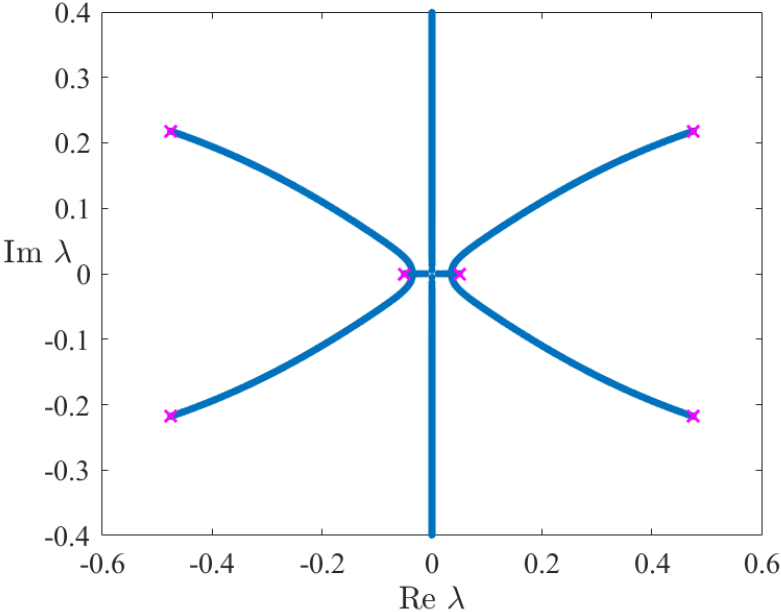}}
	\subfigure[$\epsilon=0.2$]{\includegraphics[width=1.8in,height=1.4in]{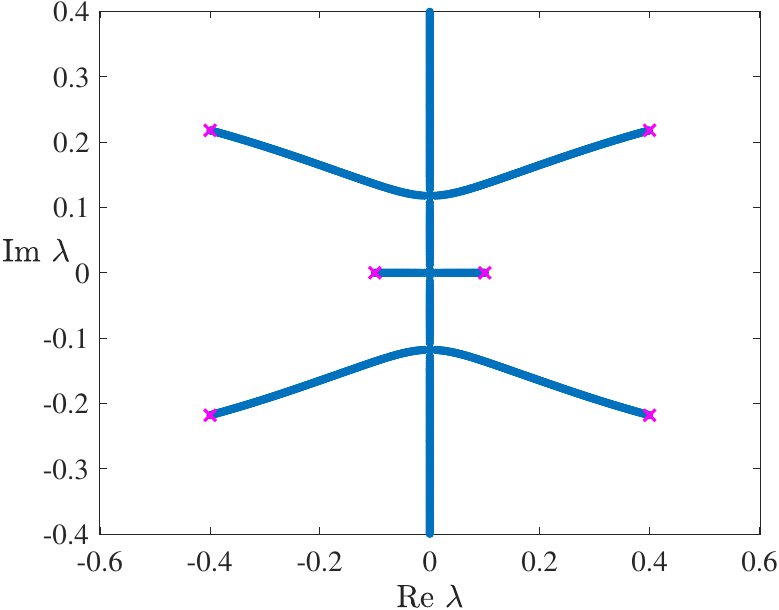}}
	\vspace{-0.2cm}	
	\subfigure[$\epsilon=-0.3$]{\includegraphics[width=1.8in,height=1.4in]{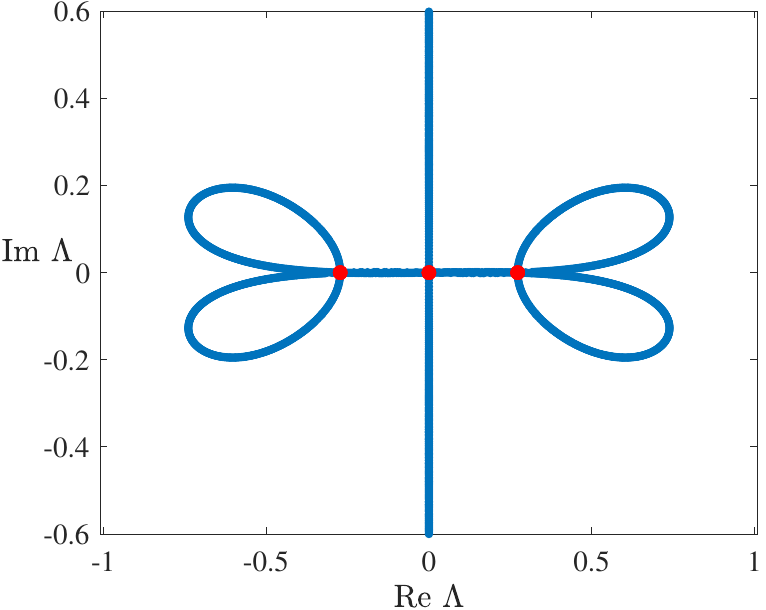}}
	\subfigure[$\epsilon=-0.05$]{\includegraphics[width=1.8in,height=1.4in]{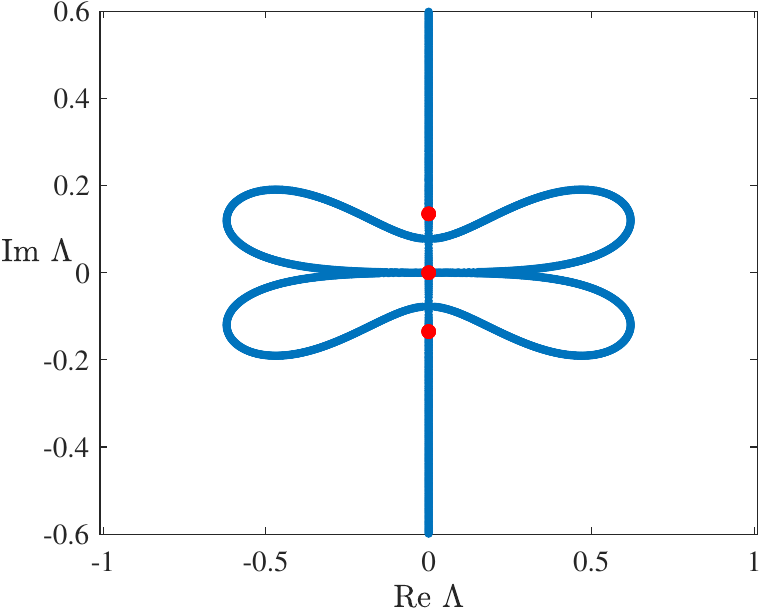}}
	\subfigure[$\epsilon=0.2$]{\includegraphics[width=1.8in,height=1.4in]{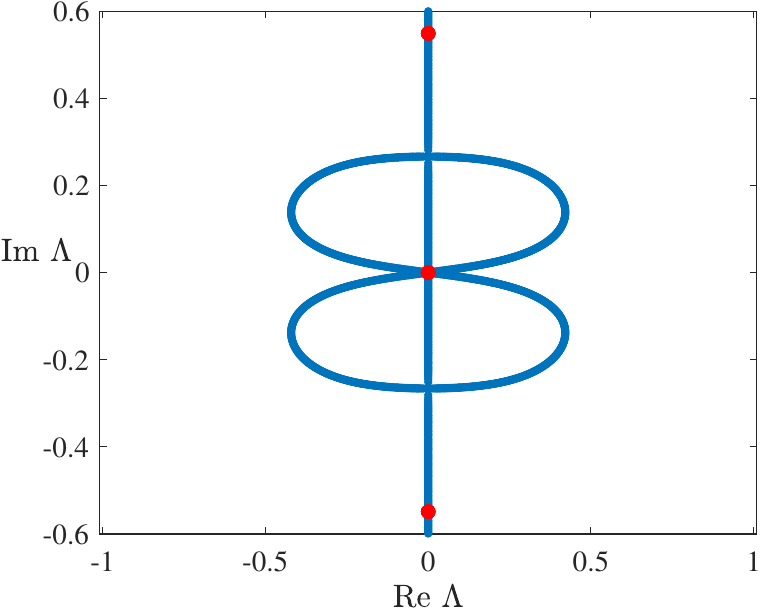}}
	\caption{The Lax and stability spectra for periodic waveform (\ref{solution_2_1}) with $\kappa=0.9$ and different values of $\epsilon$. (a)-(d): Lax spectrum in $\lambda$-plane; (e)-(h): stability spectrum  in $\Lambda$-plane.}
	\vspace{-0.2cm}
	\label{fig_5}
\end{figure}

\section{Conclusion}
\label{sec_modu}

We have studied the spectral stability of the  periodic traveling waves in the focusing mKdV equation and showed that the instability bands for the cnoidal periodic waves transform from figure-$8$ into figure-$\infty$ due to the co-periodic instability bifurcation. This transformation is rather generic for other models with periodic traveling waves (Stokes waves) \cite{Carter,DO11,DDLS24}. 

The conclusion was obtained by using a relation between squared eigenfunctions of the Lax pair and eigenfunctions of the linearized mKdV equation at the periodic traveling waves. The location of the Lax spectrum remains an open problem, especially for the cnoidal periodic waves. It is expected that the elliptic function theory can be useful to compute it explicitly. 

Given universality of the focusing mKdV equation for many applications in fluids, optics, and plasmas, the conclusions obtained in this work can be used for the comprehensive study of the modulational instability of the cnoidal periodic waves in the relevant non-integrable models.

{\bf Acknowledgements.} The work of S. Cui was conducted during PhD studies while visiting McMaster University with the financial support from the China Scholarship Council. The work of D. E. Pelinovsky was supported in part by the National Natural Science Foundation of China (No. 12371248).

\end{document}